\providecommand{\U}[1]{\protect\rule{.1in}{.1in}}
\newtheorem{theorem}{Theorem}[section]
\newtheorem{conjecture}[theorem]{Conjecture}
\newtheorem{corollary}[theorem]{Corollary}
\newtheorem{lemma}[theorem]{Lemma}
\newtheorem{proposition}[theorem]{Proposition}
\newenvironment{proof}[1][Proof]{\noindent\textbf{#1.} }{\ \rule{0.5em}{0.5em}}
\begin{document}

\title{Critical Sets in Bipartite Graphs}
\author{Vadim E. Levit\\Department of computer Science and Mathematics\\Ariel University Center of Samaria, Israel\\levitv@ariel.ac.il
\and Eugen Mandrescu\\Department of computer Science\\Holon Institute of Technology, Israel\\eugen\_m@hit.ac.il}
\date{}
\maketitle

\begin{abstract}
Let $G=\left(  V,E\right)  $ be a graph. A set $S\subseteq V$ is
\textit{independent} if no two vertices from $S$ are adjacent, and by
$\mathrm{Ind}(G)$ ($\Omega(G)$) we mean the set of all (maximum) independent
sets of $G$, while $\alpha(G)=\left\vert S\right\vert $ for $S\in\Omega(G)$,
and $\mathrm{core}(G)=\cap\{S:S\in\Omega(G)\}$ \cite{LevMan2002a}. The
\textit{neighborhood} of $A\subseteq V$ is\emph{ }denoted by $N(A)=\{v\in
V:N(v)\cap A\neq\emptyset\}$, where $N(v)$ is the neighborhood of the vertex
$v$. The number $d\left(  X\right)  =\left\vert X\right\vert -\left\vert
N(X)\right\vert $ is the \textit{difference} of the set $X\subseteq V$, and
$d_{c}(G)=\max\{d\left(  I\right)  :I\in\mathrm{Ind}(G)\}$ is called the
\textit{critical difference }of $G$. A set $X$ is \textit{critical} if
$d(X)=d_{c}(G)$ \cite{Zhang}.

For a graph $G$ we define $\mathrm{\ker}(G)=\cap\left\{  S:S\text{ \textit{is
a critical independent set}}\right\}  $, while \textrm{diadem}$(G)=\cup
\left\{  S:S\text{ \textit{is a critical independent set}}\right\}  $.

For a bipartite graph $G=\left(  A,B,E\right)  $, with bipartition $\left\{
A,B\right\}  $, Ore \cite{Ore55} defined $\delta\left(  X\right)  =d\left(
X\right)  $ for every $X\subseteq A$, while $\delta_{0}\left(  A\right)
=\max\{\delta\left(  X\right)  :X\subseteq A\}$. Similarly is defined
$\delta_{0}\left(  B\right)  $.

In this paper we prove that for every bipartite graph $G=\left(  A,B,E\right)
$ the following assertions hold:

\begin{itemize}
\item $d_{c}\left(  G\right)  =\delta_{0}\left(  A\right)  +\delta_{0}\left(
B\right)  $;

\item $\mathrm{\ker}(G)=\mathrm{core}(G)$;

\item $\left\vert \mathrm{\ker}(G)\right\vert +\left\vert \mathrm{diadem}%
(G)\right\vert =2\alpha\left(  G\right)  $.

\end{itemize}

\textbf{Keywords:} maximum independent set, maximum matching, critical set,
critical difference, K\"{o}nig-Egerv\'{a}ry graph.

\end{abstract}

\section{Introduction}

Throughout this paper $G=(V,E)$ is a simple (i.e., a finite, undirected,
loopless and without multiple edges) graph with vertex set $V=V(G)$ and edge
set $E=E(G)$. If $X\subseteq V$, then $G[X]$ is the subgraph of $G$ spanned by
$X$. By $G-W$ we mean either the subgraph $G[V-W]$, if $W\subseteq V(G)$, or
the partial subgraph $H=(V,E-W)$ of $G$, for $W\subseteq E(G)$. In either
case, we use $G-w$, whenever $W$ $=\{w\}$. By $G=\left(  A,B,E\right)  $ we
denote a bipartite graph having $\left\{  A,B\right\}  $ as a bipartition and
we assume that $A\neq\emptyset\neq B$.

The \textit{neighborhood} of a vertex $v\in V$ is the set $N(v)=\{w:w\in V$
\textit{and} $vw\in E\}$, while the \textit{neighborhood} of $A\subseteq V$
is\emph{ }denoted by $N(A)=N_{G}(A)=\{v\in V:N(v)\cap A\neq\emptyset\}$, and
$N[A]=N(A)\cup A$.

A \textit{matching} is a set of non-incident edges of $G$; a matching of
maximum cardinality $\mu(G)$ is a \textit{maximum matching}, and a
\textit{perfect matching} is a matching covering all the vertices of $G$. If
$M$ is a matching, then $M\left(  v\right)  $ means the mate of the vertex $v$
by $M$, and $M\left(  X\right)  =\left\{  M\left(  v\right)  :v\in X\right\}
$ for $X\subset V\left(  G\right)  $.

A set $S\subseteq V(G)$ is \textit{independent }(or \textit{stable}) if no two
vertices from $S$ are adjacent, and by $\mathrm{Ind}(G)$ we denote the set of
all independent sets of $G$. An independent set of maximum size will be
referred to as a \textit{maximum independent set} of $G$, and the
\textit{independence number }of $G$ is $\alpha(G)=\max\{\left\vert
S\right\vert :S\in\mathrm{Ind}(G)\}$. Let $\Omega(G)$ be the family of all
maximum independent sets of $G$, and $\mathrm{core}(G)=\cap\{S:S\in
\Omega(G)\}$ \cite{LevMan2002a}.

Recall from \cite{Zhang} the following definitions for a graph $G=\left(
V,E\right)  $:

\begin{itemize}
\item $d(X)=\left\vert X\right\vert -\left\vert N(X)\right\vert ,X\subseteq V$
is the \textit{difference} of the set $X$;

\item $d_{c}(G)=\max\{d(X):X\subseteq V\}$ is the \textit{critical difference}
of $G$;

\item a set $U\subseteq V$ is $d$-\textit{critical} if $d(U)=d_{c}(G)$;

\item $id_{c}(G)=\max\{d(I):I\in\mathrm{Ind}(G)\}$ is the \textit{critical
independence difference} of $G$;

\item if $A\subseteq V$ is independent and $d(A)=id_{c}(G)$, then $A$ is
\textit{critical independent}.
\end{itemize}

For a graph $G$ let us denote
\begin{align*}
\mathrm{\ker}(G)  &  =\cap\left\{  S:S\subseteq V\text{ \textit{is a critical
independent set}}\right\}  ,\\
\mathrm{diadem}(G)  &  =\cup\left\{  S:S\subseteq V\text{ \textit{is a
critical independent set}}\right\}  .
\end{align*}

For instance, the graph $G_{1}$ from Figure \ref{fig51} has $X=\left\{
x,y,z,u,v\right\}  $ as a critical set, because $N\left(  X\right)  =\left\{
a,b,u,v\right\}  $ and $d(X)=1=d_{c}(G_{1})$. In addition, let us notice that
$\mathrm{\ker}(G_{1})=\left\{  x,y\right\}  \subset$ \textrm{core}$(G_{1})$,
and \textrm{diadem}$(G_{1})=\left\{  x,y,z\right\}  $. The graph $G_{2}$ from
Figure \ref{fig51} has $d_{c}(G_{1})=d\left(  \left\{  v_{1},v_{2}\right\}
\right)  =\left\vert \left\{  v_{1},v_{2}\right\}  \right\vert -\left\vert
N\left(  \left\{  v_{1},v_{2}\right\}  \right)  \right\vert =1$. It is easy to
see that \textrm{core}$(G_{1})$ is a critical set, while \textrm{core}%
$(G_{2})$ is not a critical set, but $\mathrm{\ker}(G_{2})=\left\{
v_{1},v_{2}\right\}  \subset$ \textrm{core}$(G_{2})$. \begin{figure}[h]
\setlength{\unitlength}{1cm}\begin{picture}(5,1.9)\thicklines
\multiput(1,0.5)(1,0){4}{\circle*{0.29}}
\put(1,0.5){\line(1,0){3}}
\multiput(2,1.5)(1,0){3}{\circle*{0.29}}
\put(2,0.5){\line(0,1){1}}
\put(4,0.5){\line(0,1){1}}
\put(4,0.5){\line(-1,1){1}}
\put(3,1.5){\line(1,0){1}}
\put(3,0.1){\makebox(0,0){$z$}}
\put(1,0.1){\makebox(0,0){$x$}}
\put(4,0.1){\makebox(0,0){$a$}}
\put(2,0.1){\makebox(0,0){$b$}}
\put(1.67,1.5){\makebox(0,0){$y$}}
\put(2.67,1.5){\makebox(0,0){$u$}}
\put(4.3,1.5){\makebox(0,0){$v$}}
\put(0.4,1){\makebox(0,0){$G_{1}$}}
\multiput(6,0.5)(1,0){8}{\circle*{0.29}}
\multiput(6,1.5)(1,0){5}{\circle*{0.29}}
\put(12,1.5){\circle*{0.29}}
\put(6,0.5){\line(1,0){7}}
\put(6,0.5){\line(0,1){1}}
\put(6,0.5){\line(1,1){1}}
\put(7,0.5){\line(1,1){1}}
\put(8,0.5){\line(0,1){1}}
\put(8,0.5){\line(1,1){1}}
\put(9,1.5){\line(1,0){1}}
\put(10,0.5){\line(0,1){1}}
\put(12,0.5){\line(0,1){1}}
\put(12,1.5){\line(1,-1){1}}
\put(5.65,1.5){\makebox(0,0){$v_{1}$}}
\put(6.65,1.5){\makebox(0,0){$v_{2}$}}
\put(6,0.1){\makebox(0,0){$v_{3}$}}
\put(7,0.1){\makebox(0,0){$v_{5}$}}
\put(7.65,1.5){\makebox(0,0){$v_{4}$}}
\put(8,0.1){\makebox(0,0){$v_{6}$}}
\put(9,0.1){\makebox(0,0){$v_{7}$}}
\put(8.65,1.5){\makebox(0,0){$v_{8}$}}
\put(10,0.1){\makebox(0,0){$v_{10}$}}
\put(10.35,1.5){\makebox(0,0){$v_{9}$}}
\put(12.4,1.5){\makebox(0,0){$v_{14}$}}
\put(11,0.1){\makebox(0,0){$v_{11}$}}
\put(12,0.1){\makebox(0,0){$v_{12}$}}
\put(13,0.1){\makebox(0,0){$v_{13}$}}
\put(5.1,1){\makebox(0,0){$G_{2}$}}
\end{picture}\caption{\textrm{core}$(G_{1})=\{x,y,z\}$, while \textrm{core}%
$(G_{2})=\{v_{1},v_{2},v_{7},v_{11}\}$.}%
\label{fig51}%
\end{figure}
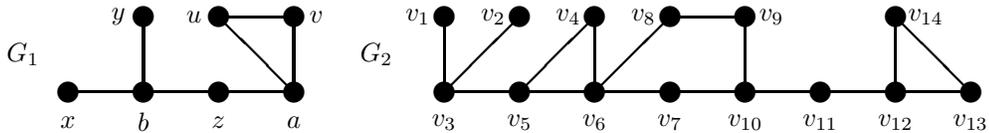

The following results will be used in the sequel.

\begin{theorem}
\label{Th33}Let $G$\ be a graph. Then the following assertions are true:

\emph{(i)} \cite{Zhang} $d_{c}(G)$ $=id_{c}(G)$.

\emph{(ii)} \cite{Larson2007} There is a matching from $N(S)$ into $S$, for
every critical independent set $S$.

\emph{(iii)} \cite{Butenko} Each critical independent set is contained in a
maximum independent set.

\emph{(iv)} \cite{LevMan2011} the function $d$ is supermodular, i.e.,
\[
d(X\cup Y)+d(X\cap Y)\geq d(X)+d(Y)\text{ for every }X,Y\subseteq V\left(
G\right)  ;
\]

\emph{(v)} \cite{LevMan2011} if $S_{1},S_{2}$ are $d$-critical sets, then
$S_{1}\cap S_{2},S_{1}\cup S_{2}$ are $d$-critical as well;

\emph{(vi) }\cite{LevMan2011} there is a unique minimal $d$-critical set,
namely, $\mathrm{\ker}(G)$.

\emph{(vii)} \cite{LevMan2011} $\mathrm{\ker}(G)\subseteq\mathrm{core}(G)$.
\end{theorem}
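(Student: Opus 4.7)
Since Theorem \ref{Th33} bundles seven assertions drawn from several sources, my plan is to prove them in a logical order that differs from the listed one, letting later parts rest on earlier ones. I would treat the set-theoretic statements (iv) and (v) first, then (i), then the matching statements (ii) and (iii), and finally the kernel results (vi) and (vii), which are the substantive goal.

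Part (iv) reduces to submodularity of $|N(\cdot)|$: since $N(X\cup Y)=N(X)\cup N(Y)$ while $N(X\cap Y)\subseteq N(X)\cap N(Y)$, inclusion--exclusion gives $|N(X\cup Y)|+|N(X\cap Y)|\le|N(X)|+|N(Y)|$, which combined with $|X\cup Y|+|X\cap Y|=|X|+|Y|$ is exactly (iv). Part (v) is immediate from (iv): if $d(S_{1})=d(S_{2})=d_{c}(G)$, then $d(S_{1}\cup S_{2})+d(S_{1}\cap S_{2})\ge 2d_{c}(G)$ with each summand at most $d_{c}(G)$, so both must equal $d_{c}(G)$.

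For (i), the nontrivial direction $d_{c}(G)\le id_{c}(G)$ I would obtain by choosing a $d$-critical $X$ of minimum cardinality and arguing $X$ must be independent: if an edge $uv$ lies in $G[X]$, a short case check produces a smaller set whose difference is no smaller, contradicting minimality. Part (ii) is a Hall-type argument: if some $T\subseteq N(S)$ satisfied $|N(T)\cap S|<|T|$, then $S':=S\setminus(N(T)\cap S)$ would be independent and $N(S')$ would lose all of $T$, yielding $d(S')\ge d(S)+1$, against the criticality of $S$. Part (iii) follows from (ii): for $S$ critical independent and any $U\in\Omega(G)$, the set $(U\setminus N[S])\cup S$ is independent, and the matching of (ii) applied to $N(S)\cap U$ (whose mates land in $S\setminus U$ by the independence of $U$) yields $|S\setminus U|\ge|N(S)\cap U|$, which makes this new set of size at least $|U|$.

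Parts (vi) and (vii) are the substantive targets. For (vi), (v) implies the intersection $K$ of all $d$-critical sets is $d$-critical, hence the unique minimum $d$-critical set; since by (i) some critical independent $S^{\star}$ exists, $K\subseteq S^{\star}$ forces $K$ to be independent, so $K=\mathrm{\ker}(G)$. The crux is (vii): for an arbitrary $S_{0}\in\Omega(G)$ I plan to show $K\cap S_{0}$ is itself a critical independent set, which then forces $K\subseteq K\cap S_{0}\subseteq S_{0}$. This reduces to establishing the identity $|N(K)\setminus N(K\cap S_{0})|=|K\setminus S_{0}|$: one inequality comes from the fact that $(S_{0}\setminus N(K))\cup K$ is independent and therefore cannot exceed $|S_{0}|$, while the other exploits the matching of (ii) restricted to $N(K)\cap S_{0}$, whose $K$-neighbors all lie in $K\setminus S_{0}$ by independence of $S_{0}$. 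Managing this matching bookkeeping and verifying that the two inequalities meet is the step I expect to demand the most attention.
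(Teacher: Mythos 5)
The paper does not prove Theorem \ref{Th33} at all: every item is quoted from the literature (\cite{Zhang}, \cite{Larson2007}, \cite{Butenko}, \cite{LevMan2011}) and the theorem serves purely as background for the later sections. So there is no ``official'' argument to compare against; judged on its own, your reconstruction follows the standard routes and is essentially sound. Parts (iv)--(v) via submodularity of $\left\vert N(\cdot)\right\vert$, the Hall-type contradiction for (ii) (deleting $N(T)\cap S$ from $S$ removes all of $T$ from the neighborhood and strictly increases $d$), the exchange set $(U\setminus N[S])\cup S$ for (iii), and the derivation of (vi) from (v) together with (i) are all correct. Your plan for (vii) also goes through: for $K=\mathrm{\ker}(G)$ and $S_{0}\in\Omega(G)$, independence of $(S_{0}\setminus N(K))\cup K$ gives $\left\vert K\setminus S_{0}\right\vert \leq\left\vert N(K)\cap S_{0}\right\vert$, and since $N(K)\cap S_{0}$ is disjoint from $N(K\cap S_{0})$ (by independence of $S_{0}$) one obtains $\left\vert N(K)\setminus N(K\cap S_{0})\right\vert \geq\left\vert K\setminus S_{0}\right\vert$, hence $d(K\cap S_{0})\geq d(K)=d_{c}(G)$; minimality of $K$ then forces $K\subseteq K\cap S_{0}\subseteq S_{0}$. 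Note that only this one inequality is needed -- the matching half of your claimed identity is confirmation, not a required ingredient.

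The one genuine soft spot is part (i). Taking a $d$-critical set $X$ of minimum cardinality and deleting one endpoint of an edge $uv$ of $G[X]$ does \emph{not} automatically preserve the difference: $d(X-u)=d(X)-1+\left\vert P\right\vert$, where $P$ is the set of vertices of $N(X)$ whose only neighbor in $X$ is $u$, so if $u$ has no such private neighbor the difference strictly drops and no contradiction arises. The ``short case check'' as described therefore fails in that case. The standard repair is to delete all of $X\cap N(X)$ at once: for $I=X\setminus N(X)$ one checks $N(I)\subseteq N(X)\setminus X$, whence $d(I)\geq d(X)$, so $I$ is critical, independent, and properly contained in $X$ unless $X$ was already independent -- which both proves $d_{c}(G)\leq id_{c}(G)$ directly and rescues your minimum-cardinality formulation. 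With that one-line substitution the whole proposal is correct.
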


If $\alpha(G)+\mu(G)=\left\vert V(G)\right\vert $, then $G$ is called a
\textit{K\"{o}nig-Egerv\'{a}ry graph}, \cite{dem}, \cite{ster}. It is
well-known that each bipartite graph enjoys this property \cite{eger},
\cite{koen}.

\begin{theorem}
\label{Th5} If $G=\left(  V,E\right)  $ is a K\"{o}nig-Egerv\'{a}ry graph, $M$
is a maximum matching, and $S\in\Omega\left(  G\right)  $, then:

\emph{(i)} \cite{LevMan2003} $M$ matches $V-S$ into $S$, and $N(\mathrm{core}%
(G))$ into $\mathrm{core}(G)$;

\emph{(ii)} \cite{LevManKE2009} $S$ is $d$-critical, and $d_{c}(G)=\alpha
(G)-\mu(G)=\left\vert \mathrm{core}(G)\right\vert -\left\vert N(\mathrm{core}%
(G))\right\vert $.
\end{theorem}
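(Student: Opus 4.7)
The plan for (i) is a counting argument. Since $G$ is K\"{o}nig-Egerv\'{a}ry, $|V-S|=|V(G)|-\alpha(G)=\mu(G)=|M|$. As $S$ is independent, every edge of $M$ has at most one endpoint in $S$, hence at least one in $V-S$; but $|M|=|V-S|$ forces every vertex of $V-S$ to lie on exactly one edge of $M$, whose other endpoint is then in $S$. For the $N(\mathrm{core}(G))\to\mathrm{core}(G)$ claim, any $u\in N(\mathrm{core}(G))$ is adjacent to some vertex of $\mathrm{core}(G)\subseteq S$ for every $S\in\Omega(G)$, so $u\notin S$ for every $S\in\Omega(G)$. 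The first part then gives $M(u)\in S$ for every $S\in\Omega(G)$, i.e., $M(u)\in\mathrm{core}(G)$.

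For (ii) I would first compute $d(S)$ directly. Since $S$ is a maximum independent set, every vertex of $V-S$ has a neighbor in $S$ (otherwise $S$ could be extended), so $N(S)=V-S$. Hence $d(S)=|S|-|V-S|=2\alpha(G)-|V(G)|=\alpha(G)-\mu(G)$, and since $S\in\mathrm{Ind}(G)$, Theorem~\ref{Th33}(i) yields $d_{c}(G)\geq\alpha(G)-\mu(G)$. For the reverse inequality, pick a critical independent set $S_{0}$ and, by Theorem~\ref{Th33}(iii), embed $S_{0}\subseteq S^{\ast}$ for some $S^{\ast}\in\Omega(G)$. The matching $M$ from part (i) provides a convenient injection: any $w\in N(S^{\ast})\setminus N(S_{0})$ lies in $V-S^{\ast}$, so $M(w)\in S^{\ast}$; moreover $M(w)\notin S_{0}$, for otherwise $w$ would have a neighbor in $S_{0}$. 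Thus $M$ carries $N(S^{\ast})\setminus N(S_{0})$ injectively into $S^{\ast}\setminus S_{0}$, whence $d(S^{\ast})\geq d(S_{0})=d_{c}(G)$. So every $S\in\Omega(G)$ is critical and $d_{c}(G)=\alpha(G)-\mu(G)$.

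The remaining equality $\alpha(G)-\mu(G)=|\mathrm{core}(G)|-|N(\mathrm{core}(G))|$ I would derive by counting $M$-unsaturated vertices. There are $|V(G)|-2\mu(G)=\alpha(G)-\mu(G)$ of them; by (i), every such vertex lies in every $S\in\Omega(G)$, hence in $\mathrm{core}(G)$; and within $\mathrm{core}(G)$ the $M$-saturated vertices are precisely the $M$-images of $N(\mathrm{core}(G))$, numbering $|N(\mathrm{core}(G))|$. Subtraction gives the identity. The step I expect to be the main obstacle is the reverse inequality $d_{c}(G)\leq\alpha(G)-\mu(G)$, that is, ruling out a critical independent set $S_0$ that strictly beats a maximum independent set $S^{\ast}$ containing it; the matching-based injection $N(S^{\ast})\setminus N(S_{0})\hookrightarrow S^{\ast}\setminus S_{0}$ is the decisive trick, after which everything reduces to bookkeeping with the matching $M$.
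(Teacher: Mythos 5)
The paper does not prove Theorem \ref{Th5} at all; it is imported from \cite{LevMan2003} and \cite{LevManKE2009} as a known result, so there is no in-paper argument to compare yours against. Judged on its own, your proof is correct and self-contained. The counting argument for (i) is sound: disjointness of the $|M|=|V-S|$ matching edges, each meeting $V-S$, forces exactly one endpoint of each edge in $V-S$ and one in $S$, and the $N(\mathrm{core}(G))\rightarrow\mathrm{core}(G)$ claim follows by intersecting over all $S\in\Omega(G)$ with the same fixed $M$. In (ii), the identity $N(S)=V-S$ for $S\in\Omega(G)$ gives $d(S)=\alpha(G)-\mu(G)$ and hence the lower bound on $d_{c}(G)$ via Theorem \ref{Th33}(i); the reverse inequality via the injection $w\mapsto M(w)$ from $N(S^{\ast})\setminus N(S_{0})$ into $S^{\ast}\setminus S_{0}$ (using $N(S_{0})\subseteq N(S^{\ast})$ for $S_{0}\subseteq S^{\ast}$) is exactly the right device, and it correctly identifies the only nontrivial step. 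The final count of $M$-unsaturated vertices, all of which land in $\mathrm{core}(G)$ while the saturated part of $\mathrm{core}(G)$ is precisely $M(N(\mathrm{core}(G)))$, cleanly yields $\left\vert \mathrm{core}(G)\right\vert -\left\vert N(\mathrm{core}(G))\right\vert =\alpha(G)-\mu(G)$. I see no gaps.
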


Following Ore \cite{Ore55}, \cite{Ore62}, the number
\[
\delta(X)=d\left(  X\right)  =\left\vert X\right\vert -\left\vert N\left(
X\right)  \right\vert
\]
is called the \textit{deficiency} of $X$, where $X\subseteq A$ or $X\subseteq
B$ and $G=(A,B,E)$ is a bipartite graph. Let
\[
\delta_{0}(A)=\max\{\delta(X):X\subseteq A\},\quad\delta_{0}(B)=\max
\{\delta(Y):Y\subseteq B\}.
\]

A subset $X\subseteq A$ having $\delta(X)=\delta_{0}(A)$ is called
$A$-\textit{critical}, while $Y\subseteq B$ having $\delta(B)=\delta_{0}(B)$
is called $B$-\textit{critical}. For a bipartite graph $G=\left(
A,B,E\right)  $ let us denote $\mathrm{\ker}_{A}(G)=\cap\left\{  S:S\text{ is
}A\text{-\textit{critical}}\right\}  $ and \textrm{diadem}$_{A}(G)=\cup
\left\{  S:S\text{ is }A\text{-\textit{critical}}\right\}  $. Similarly,
$\mathrm{\ker}_{B}(G)=\cap\left\{  S:S\text{ is }B\text{-\textit{critical}%
}\right\}  $ and \textrm{diadem}$_{B}(G)=\cup\left\{  S:S\text{ is
}B\text{-\textit{critical}}\right\}  $.

It is convenient to define $d\left(  \emptyset\right)  =\delta(\emptyset)=0$.

\begin{figure}[h]
\setlength{\unitlength}{1cm}\begin{picture}(5,1.9)\thicklines
\multiput(5,0.5)(1,0){7}{\circle*{0.29}}
\multiput(4,1.5)(1,0){6}{\circle*{0.29}}
\put(5,0.5){\line(-1,1){1}}
\put(5,0.5){\line(0,1){1}}
\put(5,0.5){\line(1,1){1}}
\put(6,0.5){\line(0,1){1}}
\put(6,0.5){\line(1,1){1}}
\put(7,0.5){\line(-1,1){1}}
\put(7,0.5){\line(0,1){1}}
\put(7,0.5){\line(1,1){1}}
\put(8,0.5){\line(0,1){1}}
\put(8,0.5){\line(1,1){1}}
\put(9,0.5){\line(0,1){1}}
\put(9,1.5){\line(1,-1){1}}
\put(9,1.5){\line(2,-1){2}}
\put(3.65,1.5){\makebox(0,0){$a_{1}$}}
\put(4.65,1.5){\makebox(0,0){$a_{2}$}}
\put(6.35,1.5){\makebox(0,0){$a_{3}$}}
\put(7.35,1.5){\makebox(0,0){$a_{4}$}}
\put(8.35,1.5){\makebox(0,0){$a_{5}$}}
\put(9.35,1.5){\makebox(0,0){$a_{6}$}}
\put(5,0.1){\makebox(0,0){$b_{1}$}}
\put(6,0.1){\makebox(0,0){$b_{2}$}}
\put(7,0.1){\makebox(0,0){$b_{3}$}}
\put(8,0.1){\makebox(0,0){$b_{4}$}}
\put(9,0.1){\makebox(0,0){$b_{5}$}}
\put(10,0.1){\makebox(0,0){$b_{6}$}}
\put(11,0.1){\makebox(0,0){$b_{7}$}}
\put(3,1){\makebox(0,0){$G$}}
\end{picture}\caption{$G$ is a bipartite graph without perfect matchings.}%
\label{fig233}%
\end{figure}
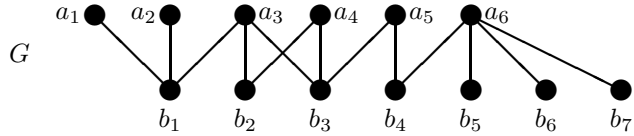For instance, the graph $G=(A,B,E)$ from Figure \ref{fig233} has:
$X=\left\{  a_{1},a_{2},a_{3},a_{4}\right\}  $ as an $A$-critical set,
$\mathrm{\ker}_{A}(G)=\left\{  a_{1},a_{2}\right\}  $, \textrm{diadem}%
$_{A}(G)=\left\{  a_{i}:i=1,...,5\right\}  $ and $\delta_{0}(A)=1$, while
$Y=\left\{  b_{i}:i=4,5,6,7\right\}  $ is a $B$-critical set, $\mathrm{\ker
}_{B}(G)=\left\{  b_{4},b_{5},b_{6}\right\}  $, \textrm{diadem}$_{B}%
(G)=\left\{  b_{i}:i=2,...,7\right\}  $ and $\delta_{0}(B)=2$.

\begin{theorem}
\cite{Ore55}\label{th1} Let $G=\left(  A,B,E\right)  $. Then the following are true:

\emph{(i) }the function $\delta$ is supermodular, i.e., $\delta(X\cup
Y)+\delta(X\cap Y)\geq\delta(X)+\delta(Y)$ for every $X,Y\subseteq A$ (or
$X,Y\subseteq B$).

\emph{(ii)} there is a unique minimal $A$-critical set, namely, $\mathrm{\ker
}_{A}(G)$, and there is a unique maximal $A$-critical set, namely,
\textrm{diadem}$_{A}(G)$; similarly, for $\mathrm{\ker}_{B}(G)$ and
\textrm{diadem}$_{B}(G)$;

\emph{(iii)} $\mu\left(  G\right)  =\left\vert A\right\vert -\delta
_{0}(A)=\left\vert B\right\vert -\delta_{0}(B)$.
\end{theorem}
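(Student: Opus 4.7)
The plan is to handle the three assertions in order: (i) by a direct counting identity, (ii) as a lattice-theoretic corollary of (i), and (iii) by reducing to the classical marriage theorem via an auxiliary graph.

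For (i), I would combine the set identity $|X\cup Y|+|X\cap Y|=|X|+|Y|$ with the neighborhood facts $N(X\cup Y)=N(X)\cup N(Y)$ and $N(X\cap Y)\subseteq N(X)\cap N(Y)$. Inclusion--exclusion applied to the neighborhoods gives $|N(X\cup Y)|+|N(X\cap Y)|\leq |N(X)|+|N(Y)|$, and subtracting from the cardinality identity yields the supermodularity inequality in one short computation. For (ii), suppose $X,Y\subseteq A$ are both $A$-critical, so $\delta(X)=\delta(Y)=\delta_{0}(A)$. Maximality forces $\delta(X\cup Y),\delta(X\cap Y)\leq \delta_{0}(A)$, while (i) forces their sum to be at least $2\delta_{0}(A)$. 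Hence both equal $\delta_{0}(A)$, and the family of $A$-critical sets is closed under finite unions and intersections. Iterating across all $A$-critical sets then produces a unique minimal $A$-critical set, namely $\mathrm{\ker}_{A}(G)$, and a unique maximal one, namely $\mathrm{diadem}_{A}(G)$; the argument for $B$ is identical.

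For (iii), I would prove $\mu(G)=|A|-\delta_{0}(A)$ by two opposing inequalities, the $B$-version following by symmetry. The easy direction: given any matching $M$ and any $X\subseteq A$, the vertices of $X$ saturated by $M$ are matched into $N(X)$, so at least $|X|-|N(X)|=\delta(X)$ vertices of $A$ remain unsaturated. Maximizing over $X$ gives $|A|-\mu(G)\geq \delta_{0}(A)$. The harder direction requires exhibiting a matching of the claimed size. I would adjoin to $B$ a set $D$ of $\delta_{0}(A)$ new vertices, each joined to every vertex of $A$, forming a bipartite graph $G^{+}$. For every $X\subseteq A$,
\[
|N_{G^{+}}(X)|=|N_{G}(X)|+\delta_{0}(A)\geq (|X|-\delta_{0}(A))+\delta_{0}(A)=|X|,
\]
so Hall's condition holds in $G^{+}$ and it admits a matching saturating $A$; removing the at most $\delta_{0}(A)$ edges meeting $D$ leaves a matching in $G$ of size at least $|A|-\delta_{0}(A)$.

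The main obstacle is this upper bound in (iii): the dummy-vertex trick is tidy once spotted, but it rests essentially on Hall's marriage theorem, which is the real content of Ore's deficiency formula. Parts (i) and (ii) are then short formal consequences that cost little beyond writing out the definitions.
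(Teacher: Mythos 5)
The paper gives no proof of this theorem: it is quoted from Ore \cite{Ore55} as background, so there is nothing internal to compare your argument against. Your proof is correct and self-contained. Part (i) is exactly the standard computation: $|X\cup Y|+|X\cap Y|=|X|+|Y|$ together with $N(X\cup Y)=N(X)\cup N(Y)$ and $N(X\cap Y)\subseteq N(X)\cap N(Y)$ gives $|N(X\cup Y)|+|N(X\cap Y)|\leq |N(X)|+|N(Y)|$, and subtracting yields supermodularity. Part (ii) is the usual lattice argument (and is the same mechanism the paper itself uses for $d$-critical sets in Theorem \ref{Th33}(v)--(vi)): the family of $A$-critical sets is closed under pairwise union and intersection, and finiteness of $A$ lets you iterate to get the unique minimal and maximal members. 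Part (iii) is the classical deficiency form of Hall's theorem; the easy inequality $|A|-\mu(G)\geq\delta_{0}(A)$ is right, and the dummy-vertex construction correctly verifies Hall's condition in $G^{+}$ and loses at most $|D|=\delta_{0}(A)$ edges on deletion, giving $\mu(G)\geq |A|-\delta_{0}(A)$. The only stylistic caveat is that part (iii) presupposes Hall's theorem rather than the K\"{o}nig--Egerv\'{a}ry min-max (which the paper invokes elsewhere); either route is legitimate, and yours is the one closest to Ore's original deficiency formulation.
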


In this paper we define two new graph parameters, namely, $\ker$ and
\textrm{diadem.} Further, we analyze their relationships with two other
parameters, \textrm{core }and \textrm{corona, }for bipartite graphs.

\section{Preliminaries}

\begin{theorem}
\label{Th4}Let $G=\left(  A,B,E\right)  $ be a bipartite graph. Then the
following assertions are true:

\emph{(i) }$d_{c}(G)=\delta_{0}(A)+\delta_{0}(B)$;

\emph{(ii) }$\alpha\left(  G\right)  =\left\vert A\right\vert +\delta
_{0}(B)=\left\vert B\right\vert +\delta_{0}(A)=\mu\left(  G\right)
+\delta_{0}(A)+\delta_{0}(B)=\mu\left(  G\right)  +d_{c}\left(  G\right)  $;

\emph{(iii)} if $X$ is an $A$-\textit{critical set and }$Y$\textit{ is a }%
$B$-\textit{critical set, then }$X\cup Y$ is a $d$-critical set;

\emph{(iv)} if $Z$ is a $d$-critical independent set, then $Z\cap A$ is an
$A$-\textit{critical set }and $Z\cap B$ is \textit{a }$B$-\textit{critical
set;}

\emph{(v)} if $X$ is either an $A$-\textit{critical set or a }$B$%
-\textit{critical set, then there is a matching from }$N\left(  X\right)  $
into $X$.
\end{theorem}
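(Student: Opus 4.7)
My plan is to handle (i) and (ii) in one stroke using the König--Egerváry identity together with Ore's matching formula, then derive (iii) and (iv) by a direct edge count exploiting the bipartition, and finally prove (v) by a Hall-type argument. I expect the main obstacle to be (v): one cannot invoke Theorem~\ref{Th33}(ii) directly on an $A$-critical set $X$, because $d(X)=\delta_0(A)$ equals $d_c(G)=\delta_0(A)+\delta_0(B)$ only when $\delta_0(B)=0$, so in general $X$ is \emph{not} $d$-critical in $G$.

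For (ii), every bipartite graph is König--Egerváry, so $\alpha(G)+\mu(G)=|A|+|B|$. Combining this with $\mu(G)=|A|-\delta_0(A)=|B|-\delta_0(B)$ from Theorem~\ref{th1}(iii) yields, in one line, $\alpha(G)=|B|+\delta_0(A)=|A|+\delta_0(B)=\mu(G)+\delta_0(A)+\delta_0(B)$. Part (i) follows immediately from Theorem~\ref{Th5}(ii), since $d_c(G)=\alpha(G)-\mu(G)=\delta_0(A)+\delta_0(B)$, and this simultaneously closes the final equality in (ii).

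For (iii) and (iv), the unifying observation is that whenever $X\subseteq A$ and $Y\subseteq B$, one has $X\cap Y=\emptyset$ and $N(X)\cap N(Y)=\emptyset$ (because $N(X)\subseteq B$ and $N(Y)\subseteq A$), so $d(X\cup Y)=\delta(X)+\delta(Y)$. In (iii), if $X$ is $A$-critical and $Y$ is $B$-critical, this evaluates to $\delta_0(A)+\delta_0(B)=d_c(G)$ by (i), so $X\cup Y$ is $d$-critical. In (iv), writing $Z=Z_A\cup Z_B$ with $Z_A=Z\cap A$ and $Z_B=Z\cap B$, the same identity gives $\delta(Z_A)+\delta(Z_B)=\delta_0(A)+\delta_0(B)$; since each summand is bounded above by the corresponding $\delta_0$, equality must hold term-by-term, so $Z_A$ is $A$-critical and $Z_B$ is $B$-critical.

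For (v), assume $X$ is $A$-critical (the $B$-case is symmetric), and verify Hall's condition for a matching from $N(X)$ into $X$ in the bipartite subgraph $G[X\cup N(X)]$. If some $S\subseteq N(X)$ had $|N(S)\cap X|<|S|$, then setting $X'=X\setminus N(S)$ ensures that no vertex of $X'$ is adjacent to $S$, hence $N(X')\subseteq N(X)\setminus S$; the estimates $|X'|\ge |X|-|S|+1$ and $|N(X')|\le |N(X)|-|S|$ then force $\delta(X')\ge \delta(X)+1>\delta_0(A)$, contradicting the maximality of $\delta_0(A)$. Hall's condition therefore holds, delivering the required matching.
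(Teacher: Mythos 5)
Your proposal is correct and follows essentially the same route as the paper: (i)--(ii) from the K\"{o}nig--Egerv\'{a}ry identity combined with Ore's formula $\mu(G)=|A|-\delta_{0}(A)=|B|-\delta_{0}(B)$, (iii)--(iv) from the disjointness of $N(X)\subseteq B$ and $N(Y)\subseteq A$, and (v) by the identical Hall-type deficiency argument applied to $X\setminus N(S)$. The only cosmetic difference is in (iii), where the paper invokes supermodularity of $d$ while you use the direct additivity $d(X\cup Y)=\delta(X)+\delta(Y)$ --- the very computation the paper itself carries out in part (iv).
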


\begin{proof}
\emph{(i) }By Theorems \ref{th1}\emph{(iii)} and \ref{Th5}\emph{(ii)} we get%
\begin{align*}
d_{c}(G)  &  =\alpha\left(  G\right)  -\mu\left(  G\right)  =\left\vert
A\right\vert +\left\vert B\right\vert -2\mu\left(  G\right)  =\\
&  =\left\vert A\right\vert +\left\vert B\right\vert -\left(  \left\vert
A\right\vert -\delta_{0}(A)\right)  -\left(  \left\vert B\right\vert
-\delta_{0}(B)\right)  =\delta_{0}(A)+\delta_{0}(B).
\end{align*}

\emph{(ii) }Using Theorem \ref{th1}\emph{(iii)}, we infer that\emph{ }%
\[
\alpha\left(  G\right)  =\left\vert A\cup B\right\vert -\mu\left(  G\right)
=\left\vert A\cup B\right\vert -\left\vert A\right\vert +\delta_{0}%
(A)=\left\vert B\right\vert +\delta_{0}(A).
\]
Similarly, one can find $\alpha\left(  G\right)  =\left\vert A\right\vert
+\delta_{0}(B).$

According to part \emph{(i)}, we obtain%
\begin{align*}
\mu\left(  G\right)  +d_{c}\left(  G\right)   &  =\mu\left(  G\right)
+\delta_{0}(A)+\delta_{0}(B)=\\
&  =\left\vert A\right\vert -\delta_{0}(A)+\delta_{0}(A)+\delta_{0}%
(B)=\left\vert A\right\vert +\delta_{0}(B)=\alpha\left(  G\right)  .
\end{align*}

\emph{(iii) }By supermodularity of the function $d$ (Theorem \ref{Th33}%
\emph{(iv)}) and part \emph{(i)}, we have
\begin{align*}
d_{c}\left(  G\right)   &  \geq d(X\cup Y)=d(X\cup Y)+d(X\cap Y)\geq\\
&  \geq d(X)+d(Y)=\delta(X)+\delta(Y)=\delta_{0}(A)+\delta_{0}(B)=d_{c}(G).
\end{align*}

\emph{(iv)} Since $Z=\left(  Z\cap A\right)  \cup\left(  Z\cap B\right)  $ and
$N\left(  Z\cap A\right)  \cap\left(  Z\cap B\right)  =\emptyset=N\left(
Z\cap B\right)  \cap\left(  Z\cap A\right)  $, then%
\begin{gather*}
d\left(  Z\cap A\right)  +d\left(  Z\cap B\right)  =\left\vert Z\cap
A\right\vert -\left\vert N\left(  Z\cap A\right)  \right\vert +\left\vert
Z\cap B\right\vert -\left\vert N\left(  Z\cap B\right)  \right\vert =\\
=\left\vert Z\cap A\right\vert +\left\vert Z\cap B\right\vert -\left\vert
N\left(  Z\cap A\right)  \right\vert -\left\vert N\left(  Z\cap B\right)
\right\vert =\left\vert Z\right\vert -\left\vert N\left(  Z\right)
\right\vert =d\left(  Z\right)  .
\end{gather*}

Using the fact that $d\left(  Z\right)  =d_{c}(G)=\delta_{0}(A)+\delta_{0}%
(B)$, it follows that $d\left(  Z\cap A\right)  =\delta_{0}(A)$ and $d\left(
Z\cap B\right)  =\delta_{0}(B)$.

\emph{(v)} Let $X$ be an $A$-critical set. Suppose to the contrary that there
is no matching from $N\left(  X\right)  $ into $X$. By Hall's Theorem it means
that there exists $U\subseteq N\left(  X\right)  $ such that $\left\vert
N\left(  U\right)  \cap X\right\vert <\left\vert U\right\vert $. Consequently,
we obtain
\begin{gather*}
\delta\left(  X-N\left(  U\right)  \right)  =\left\vert X-N\left(  U\right)
\right\vert -\left\vert N\left(  X-N\left(  U\right)  \right)  \right\vert =\\
=\left\vert X\right\vert -\left\vert X\cap N\left(  U\right)  \right\vert
-\left(  \left\vert N\left(  X\right)  \right\vert -\left\vert U\right\vert
\right)  =\\
=\left\vert X\right\vert -\left\vert N\left(  X\right)  \right\vert +\left(
\left\vert U\right\vert -\left\vert X\cap N\left(  U\right)  \right\vert
\right)  =\delta_{0}(A)+\left(  \left\vert U\right\vert -\left\vert X\cap
N\left(  U\right)  \right\vert \right)  >\delta_{0}(A),
\end{gather*}
which contradicts the fact that $X$ is an $A$-critical set.
\end{proof}

It is known that a bipartite graph$\ G$ has a perfect matching if and only if
$\alpha\left(  G\right)  =\mu\left(  G\right)  $. Hence using Theorem
\ref{Th4}\emph{(ii)}, we deduce the following.

\begin{corollary}
\cite{Ore55} A bipartite graph $G=\left(  A,B,E\right)  $ has a perfect
matching if and only if $\delta_{0}\left(  A\right)  =0=\delta_{0}\left(
B\right)  $.
\end{corollary}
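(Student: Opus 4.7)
The plan is to reduce the statement to a one-line substitution into Theorem \ref{Th4}(ii), combined with the well-known equivalence recalled just before the corollary. Specifically, a bipartite graph $G = (A,B,E)$ has a perfect matching if and only if $\alpha(G) = \mu(G)$ (both numbers equal $\tfrac{1}{2}|V(G)|$, using that bipartite graphs are K\"onig-Egerv\'ary), so the task reduces to showing that $\alpha(G) = \mu(G)$ is equivalent to $\delta_0(A) = \delta_0(B) = 0$.

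For this, I would directly invoke the identity $\alpha(G) = \mu(G) + \delta_0(A) + \delta_0(B)$ from Theorem \ref{Th4}(ii). Subtracting $\mu(G)$ from both sides, $\alpha(G) = \mu(G)$ holds if and only if $\delta_0(A) + \delta_0(B) = 0$. To finish, I need to know that each of $\delta_0(A)$ and $\delta_0(B)$ is nonnegative. This is immediate from the convention $\delta(\emptyset) = 0$ adopted in the paper, since $\delta_0(A) = \max\{\delta(X) : X \subseteq A\} \geq \delta(\emptyset) = 0$, and analogously for $\delta_0(B)$. A sum of two nonnegative integers being $0$ forces both summands to be $0$, which yields the equivalence.

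I do not anticipate any genuine obstacle: the only conceptual work was already absorbed into Theorem \ref{Th4}(ii), and the corollary simply reads off the condition for $\alpha(G) - \mu(G) = 0$ from the formula $\alpha(G) - \mu(G) = \delta_0(A) + \delta_0(B)$. The nonnegativity of $\delta_0$ is the only side remark required, and it is handed to us by the explicit convention on $\delta(\emptyset)$.
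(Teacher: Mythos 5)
Your proposal is correct and follows exactly the route the paper intends: the paper derives the corollary from the equivalence ``perfect matching $\Leftrightarrow \alpha(G)=\mu(G)$'' together with Theorem \ref{Th4}\emph{(ii)}. Your additional remark that $\delta_{0}(A),\delta_{0}(B)\geq 0$ via the convention $\delta(\emptyset)=0$ is a detail the paper leaves implicit, and it is needed to pass from $\delta_{0}(A)+\delta_{0}(B)=0$ to both terms vanishing.
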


\begin{lemma}
\label{Lem1}Let $G=\left(  A,B,E\right)  $ be a bipartite graph. If $X$ is an
$A$-\textit{critical set and }$Y$\textit{ is a }$B$-\textit{critical set, then
}$\left\vert X\cap N\left(  Y\right)  \right\vert =\left\vert N\left(
X\right)  \cap Y\right\vert $. Moreover, there is a perfect matching between
$X\cap N\left(  Y\right)  $ and $N\left(  X\right)  \cap Y$.
\end{lemma}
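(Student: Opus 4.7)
The plan is to extract the required matching directly from the two auxiliary matchings guaranteed by Theorem~\ref{Th4}\emph{(v)}: there is a matching $M_{1}$ from $N(X)$ into $X$ and a matching $M_{2}$ from $N(Y)$ into $Y$. I will show that $M_{2}$, restricted to $X\cap N(Y)$, is itself a matching into $N(X)\cap Y$; symmetrically $M_{1}$ restricted to $N(X)\cap Y$ is a matching into $X\cap N(Y)$. Comparing cardinalities yields $|X\cap N(Y)|=|N(X)\cap Y|$, and once the two sides have equal size, a matching saturating one of them is automatically perfect.

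First I would record the set inclusions that make the argument work: because $G$ is bipartite with bipartition $\{A,B\}$ and $X\subseteq A$, $Y\subseteq B$, one has $N(X)\subseteq B$ and $N(Y)\subseteq A$, so $X\cap N(Y)\subseteq A$ and $N(X)\cap Y\subseteq B$. Thus these two sets already lie on opposite sides of the bipartition, and any edge between them is an edge of $G$.

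The key observation is the following. Take $u\in X\cap N(Y)$. Since $u\in N(Y)$, the matching $M_{2}$ assigns to $u$ a mate $M_{2}(u)\in Y$ with $uM_{2}(u)\in E$. Because $u\in X$, this mate lies in $N(X)$, hence $M_{2}(u)\in N(X)\cap Y$. The map $u\mapsto M_{2}(u)$ is therefore an injection from $X\cap N(Y)$ into $N(X)\cap Y$ realized by actual edges of $G$, giving $|X\cap N(Y)|\le |N(X)\cap Y|$. The symmetric argument applied to $M_{1}$ restricted to $N(X)\cap Y$ (a vertex $v\in N(X)\cap Y$ has a mate $M_{1}(v)\in X$ which, being adjacent to $v\in Y$, lies in $N(Y)$) produces the reverse inequality. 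Hence $|X\cap N(Y)|=|N(X)\cap Y|$.

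Finally, since the restriction of $M_{2}$ to $X\cap N(Y)$ saturates $X\cap N(Y)$ and maps into $N(X)\cap Y$, and the two sides have the same cardinality, this restriction is already a perfect matching between $X\cap N(Y)$ and $N(X)\cap Y$. I do not foresee a genuine obstacle; the only delicate point is verifying that the mate of a vertex from $X\cap N(Y)$ under $M_{2}$ actually lands in $N(X)\cap Y$ (and symmetrically for $M_{1}$), which is immediate from the definitions of $N(X)$ and $N(Y)$ together with the bipartition. One could alternatively try a Hall's-theorem argument on the bipartite subgraph induced between $X\cap N(Y)$ and $N(X)\cap Y$, but the direct restriction above is cleaner and avoids any verification of Hall's condition.
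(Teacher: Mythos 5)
Your proof is correct and follows essentially the same route as the paper: both invoke Theorem~\ref{Th4}\emph{(v)} to obtain the matchings $M_{1}$ from $N(X)$ into $X$ and $M_{2}$ from $N(Y)$ into $Y$, observe that each matching sends one intersection injectively into the other, and conclude equality of cardinalities together with the perfect matching. The only cosmetic difference is that the paper exhibits the perfect matching as the restriction of $M_{1}$ to $N(X)\cap Y$ while you use the restriction of $M_{2}$ to $X\cap N(Y)$, which is an entirely symmetric choice.
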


\begin{proof}
By Theorem\emph{ }\ref{Th4}\emph{(v)}, there is a matching $M_{1}$ from
$N\left(  X\right)  $ into $X$, and a matching $M_{2}$ from $N\left(
Y\right)  $ into $Y$. For each $b\in N\left(  X\right)  \cap Y$, it follows
that $M_{1}\left(  b\right)  \in N\left(  b\right)  \subseteq N\left(
Y\right)  $ and $M_{1}\left(  b\right)  \in X$. Hence $M_{1}\left(  b\right)
\in X\cap N\left(  Y\right)  $, which implies $M_{1}\left(  N\left(  X\right)
\cap Y\right)  \subseteq X\cap N\left(  Y\right)  $, and further
\[
\left\vert N\left(  X\right)  \cap Y\right\vert =\left\vert M_{1}\left(
N\left(  X\right)  \cap Y\right)  \right\vert \leq\left\vert X\cap N\left(
Y\right)  \right\vert .
\]
Similarly, we have
\[
\left\vert X\cap N\left(  Y\right)  \right\vert =\left\vert M_{2}\left(  X\cap
N\left(  Y\right)  \right)  \right\vert \leq\left\vert N\left(  X\right)  \cap
Y\right\vert .
\]
Consequently, we deduce that $\left\vert X\cap N\left(  Y\right)  \right\vert
=\left\vert N\left(  X\right)  \cap Y\right\vert $ and the restriction of
$M_{1}$ to $N\left(  X\right)  \cap Y$ is a perfect matching from $N\left(
X\right)  \cap Y$ onto $X\cap N\left(  Y\right)  $.
\end{proof}

\begin{corollary}
\label{Cor1} Let $G=\left(  A,B,E\right)  $ is a bipartite graph.

\emph{(i)} \cite{Ore62} If $X=\ker_{A}\left(  G\right)  $ \textit{and }%
$Y$\textit{ is a }$B$-\textit{critical set, then }$X\cap N\left(  Y\right)
=N\left(  X\right)  \cap Y=\emptyset$;

\emph{(ii)} \cite{Ore55} $\ker_{A}\left(  G\right)  \cap N\left(  \ker
_{B}\left(  G\right)  \right)  =N\left(  \ker_{A}\left(  G\right)  \right)
\cap\ker_{B}\left(  G\right)  =\emptyset$.
\end{corollary}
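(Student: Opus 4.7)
My plan is to deduce both parts from Lemma \ref{Lem1} together with the minimality property of $\ker_A(G)$ (Theorem \ref{th1}(ii)). The guiding idea is that if $\ker_A(G)$ intersected the neighborhood of some $B$-critical set $Y$, then deleting that intersection would produce a strictly smaller $A$-critical set, violating minimality.

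For part \emph{(i)}, let $X=\ker_A(G)$ and let $Y$ be a $B$-critical set. I would consider $X'=X\setminus N(Y)$ and compare $\delta(X')$ with $\delta_{0}(A)$. First I would observe that any $v\in N(X')$ satisfies $v\notin Y$: indeed, if $v\in Y$, then any neighbor of $v$ in $X$ lies in $N(Y)$ and hence is excluded from $X'$. This gives $N(X')\subseteq N(X)\setminus Y$, so
\[
|N(X')|\le |N(X)|-|N(X)\cap Y|.
\]
Combined with $|X'|=|X|-|X\cap N(Y)|$, this yields
\[
\delta(X')\ge \delta(X)+|N(X)\cap Y|-|X\cap N(Y)|.
\]
Now Lemma \ref{Lem1} says $|X\cap N(Y)|=|N(X)\cap Y|$, so $\delta(X')\ge\delta(X)=\delta_{0}(A)$. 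Since also $\delta(X')\le\delta_{0}(A)$, the set $X'$ is $A$-critical. If $X\cap N(Y)$ were nonempty, then $X'\subsetneq X=\ker_A(G)$, contradicting the fact that $\ker_A(G)$ is the unique minimal $A$-critical set. Therefore $X\cap N(Y)=\emptyset$, and then Lemma \ref{Lem1} immediately forces $N(X)\cap Y=\emptyset$ as well.

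Part \emph{(ii)} is then a direct specialization: since $\ker_B(G)$ is itself a $B$-critical set (by Theorem \ref{th1}(ii)), applying \emph{(i)} with $Y=\ker_B(G)$ gives both desired equalities at once.

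The only subtle step is the inclusion $N(X')\subseteq N(X)\setminus Y$; everything else is a clean counting argument plus the matching equality of Lemma \ref{Lem1} and the minimality of $\ker_A(G)$. I do not expect a genuine obstacle here, as long as the supermodularity-type cancellation $|X\cap N(Y)|=|N(X)\cap Y|$ is invoked cleanly to collapse the inequality into $\delta(X')\ge\delta_{0}(A)$.
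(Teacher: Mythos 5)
Your proposal is correct and follows essentially the same route as the paper: both pass to $X'=X\setminus N(Y)$, use the observation that $N(X')\subseteq N(X)\setminus Y$, invoke the equality $\left\vert X\cap N(Y)\right\vert =\left\vert N(X)\cap Y\right\vert$ from Lemma \ref{Lem1} to conclude $\delta(X')\geq\delta_{0}(A)$, and contradict the minimality of $\ker_{A}(G)$. Part \emph{(ii)} is obtained in the same way in both, by specializing $Y=\ker_{B}(G)$.
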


\begin{proof}
\emph{(i) }Assume, to the contrary, that $X\cap N\left(  Y\right)
\neq\emptyset$. By Lemma \ref{Lem1}, we have $\left\vert X\cap N\left(
Y\right)  \right\vert =\left\vert N\left(  X\right)  \cap Y\right\vert $.

If $x\in X-X\cap N\left(  Y\right)  $ has $N\left(  x\right)  \cap
Y\neq\emptyset$, then $x\in N\left(  y\right)  \subseteq N\left(  Y\right)  $,
which is impossible. Hence $N\left(  X-X\cap N\left(  Y\right)  \right)
\subseteq N\left(  X\right)  -N\left(  X\right)  \cap Y$, and further, we get
\begin{gather*}
\left\vert X-X\cap N\left(  Y\right)  \right\vert -\left\vert N\left(  X-X\cap
N\left(  Y\right)  \right)  \right\vert \geq\left\vert X-X\cap N\left(
Y\right)  \right\vert -\left\vert N\left(  X\right)  -N\left(  X\right)  \cap
Y\right\vert =\\
=\left\vert X\right\vert -\left\vert X\cap N\left(  Y\right)  \right\vert
-\left\vert N\left(  X\right)  \right\vert +\left\vert N\left(  X\right)  \cap
Y\right\vert =\delta\left(  X\right)  =\delta_{0}\left(  A\right)  ,
\end{gather*}
and this contradicts the minimality of $X$.

\emph{(ii) }It immediately follows from part \emph{(i)}, when $Y=\ker
_{B}\left(  G\right)  $.
\end{proof}

\section{Ker and Core}

\begin{theorem}
\label{th9}Let $X$ be a critical independent set in a graph $G$. Then the
following statements are equivalent:

\emph{(i) }$X=\mathrm{\ker}(G)$;

\emph{(ii)} there is no set $Y\subseteq$ $N\left(  X\right)  ,Y\neq\emptyset$
such that $\left\vert N\left(  Y\right)  \cap X\right\vert =\left\vert
Y\right\vert $;

\emph{(iii) }for each $v\in X$ there exists a matching from $N\left(
X\right)  $ into $X-v$.
\end{theorem}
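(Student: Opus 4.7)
The plan is to prove \emph{(i)} $\Leftrightarrow$ \emph{(ii)} by a direct counting argument built on supermodularity and Theorem \ref{Th33}\emph{(ii)}, and then prove \emph{(ii)} $\Leftrightarrow$ \emph{(iii)} via Hall's marriage theorem.

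For \emph{(ii)} $\Rightarrow$ \emph{(i)} (contrapositive), I would assume $\ker(G)\subsetneq X$ and produce a bad $Y$. Set $X'=\ker(G)$ and $Y=N(X)\setminus N(X')$. Since both $X$ and $X'$ are critical, $|X|-|N(X)|=|X'|-|N(X')|$, so $|Y|=|X|-|X'|>0$. Each $y\in Y$ has no neighbor in $X'$, so the matching from $N(X)$ into $X$ guaranteed by Theorem \ref{Th33}\emph{(ii)} sends $Y$ into $X\setminus X'$; comparing cardinalities this restricted matching must be a bijection onto $X\setminus X'$, giving $N(Y)\cap X=X\setminus X'$ and $|N(Y)\cap X|=|Y|$. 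For the converse direction \emph{(i)} $\Rightarrow$ \emph{(ii)} (again by contrapositive), given nonempty $Y\subseteq N(X)$ with $|N(Y)\cap X|=|Y|$, put $X'=X\setminus(N(Y)\cap X)$. Then no vertex of $X'$ has a neighbor in $Y$, so $N(X')\subseteq N(X)\setminus Y$, and
\[
d(X')\geq |X|-|N(Y)\cap X|-|N(X)|+|Y|=d(X)=d_c(G).
\]
Thus $X'$ is a critical independent set strictly smaller than $X$, so $X\neq\ker(G)$.

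For \emph{(ii)} $\Rightarrow$ \emph{(iii)}, fix $v\in X$ and apply Hall's theorem to the bipartite subgraph between $N(X)$ and $X-v$. For any $W\subseteq N(X)$, the matching from $N(X)$ into $X$ gives $|N(W)\cap X|\geq|W|$, and \emph{(ii)} strengthens this to $|N(W)\cap X|\geq |W|+1$ whenever $W\neq\emptyset$. Hence even after deleting $v$, $|N(W)\cap(X-v)|\geq|W|$, so Hall's condition holds and the desired matching exists. For \emph{(iii)} $\Rightarrow$ \emph{(ii)} (contrapositive), if $\emptyset\neq Y\subseteq N(X)$ satisfies $|N(Y)\cap X|=|Y|$, pick any $v\in N(Y)\cap X$; any matching from $N(X)$ into $X-v$ would in particular send $Y$ injectively into $X-v$, but the neighbors of $Y$ in $X$ all lie in $N(Y)\cap X$, giving at most $|Y|-1$ available mates, contradiction.

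The only real obstacle is the counting step in \emph{(ii)} $\Rightarrow$ \emph{(i)}: one must verify both $|Y|=|X|-|X'|$ (which uses criticality of both sets) and $N(Y)\cap X\subseteq X\setminus X'$ with equality (which uses the matching from Theorem \ref{Th33}\emph{(ii)} restricted to $Y$). Everything else is Hall's theorem applied carefully.
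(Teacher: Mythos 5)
Your proof is correct and takes essentially the same route as the paper's: the equivalence \emph{(i)}$\Leftrightarrow$\emph{(ii)} rests on the same two constructions (deleting $N(Y)\cap X$ from $X$ to obtain a strictly smaller critical independent set, and taking $Y=N(X)\setminus N(\ker(G))$ together with the matching from Theorem \ref{Th33}\emph{(ii)} for the converse), while \emph{(ii)}$\Leftrightarrow$\emph{(iii)} is the same Hall's-theorem argument. No gaps.
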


\begin{proof}
\emph{(i) }$\Longrightarrow$ \emph{(ii) }By Theorem \ref{Th33}\emph{(ii)},
there is a matching, say $M$, from $N\left(  \mathrm{\ker}(G)\right)  $ into
$\mathrm{\ker}(G)$. Suppose, to the contrary, that there exists some non-empty
set$\ Y\subseteq$ $N\left(  \mathrm{\ker}(G)\right)  $ such that $\left\vert
M\left(  Y\right)  \right\vert =\left\vert N\left(  Y\right)  \cap
\mathrm{\ker}(G)\right\vert =\left\vert Y\right\vert $. It contradicts the
minimality of the set $\mathrm{\ker}(G)$, because
\[
d\left(  \mathrm{\ker}(G)-N\left(  Y\right)  \right)  =d\left(  \mathrm{\ker
}(G)\right)  \text{, while }\mathrm{\ker}(G)-N\left(  Y\right)  \subsetneqq
\mathrm{\ker}(G)\text{.}%
\]

\emph{(ii) }$\Longrightarrow$ \emph{(i) }Suppose $X-\mathrm{\ker}%
(G)\neq\emptyset$. By Theorem \ref{Th33}\emph{(ii)}, there is a matching, say
$M$, from $N\left(  X\right)  $ into $X$. Since there are no edges connecting
vertices from $\mathrm{\ker}(G)$ with vertices of $N\left(  X\right)
-N\left(  \mathrm{\ker}(G)\right)  $, we obtain that $M\left(  N\left(
X\right)  -N\left(  \mathrm{\ker}(G)\right)  \right)  \subseteq X-\mathrm{\ker
}(G)$. Moreover, we have that $\left\vert N\left(  X\right)  -N\left(
\mathrm{\ker}(G)\right)  \right\vert =\left\vert X-\mathrm{\ker}(G)\right\vert
$, otherwise
\begin{align*}
\left\vert X\right\vert -\left\vert N\left(  X\right)  \right\vert  &
=\left(  \left\vert \mathrm{\ker}(G)\right\vert -\left\vert N\left(
\mathrm{\ker}(G)\right)  \right\vert \right)  +\left(  \left\vert
X-\mathrm{\ker}(G)\right\vert -\left\vert N\left(  X\right)  -N\left(
\mathrm{\ker}(G)\right)  \right\vert \right)  >\\
&  >\left(  \left\vert \mathrm{\ker}(G)\right\vert -\left\vert N\left(
\mathrm{\ker}(G)\right)  \right\vert \right)  =d_{c}\left(  G\right)  .
\end{align*}

It means that the set $N\left(  X\right)  -N\left(  \mathrm{\ker}(G)\right)  $
contradicts the hypothesis of \emph{(ii)}, because
\[
\left\vert N\left(  X\right)  -N\left(  \mathrm{\ker}(G)\right)  \right\vert
=\left\vert X-\mathrm{\ker}(G)\right\vert =\left\vert N\left(  N\left(
X\right)  -N\left(  \mathrm{\ker}(G)\right)  \right)  \cap X\right\vert
\text{.}%
\]
Consequently, the assertion is true.

\emph{(ii) }$\Longrightarrow$ \emph{(iii)} By Theorem \ref{Th33}\emph{(ii)},
there is a matching, say $M$, from $N\left(  X\right)  $ into $X$. Suppose, to
the contrary, that there is no matching from $N\left(  X\right)  $ into $X-v$.
By Hall's Theorem, it implies the existence of a set $Y\subseteq$ $N\left(
X\right)  $ such that $\left\vert N\left(  Y\right)  \cap X\right\vert
=\left\vert Y\right\vert $, which contradicts the hypothesis of \emph{(ii)}.

\emph{(iii) }$\Longrightarrow$ \emph{(ii)} Suppose, to the contrary, there is
a non-empty subset $Y$ of $N\left(  X\right)  $ such that $\left\vert N\left(
Y\right)  \cap X\right\vert =\left\vert Y\right\vert $. Let $v$ $\in N\left(
Y\right)  \cap X$. Hence, we get $\left\vert N\left(  Y\right)  \cap
X-v\right\vert <\left\vert Y\right\vert $. Then, by Hall's Theorem, it is
impossible to find a matching from $N\left(  X\right)  $ into $X-v$, which
contradicts the hypothesis of \emph{(iii)}.
\end{proof}

\begin{lemma}
\label{lem1}If $G=\left(  A,B,E\right)  $ is a bipartite graph with a perfect
matching, say $M$, $S\in\Omega\left(  G\right)  $, $X\in$ $\mathrm{Ind}(G)$,
$X\subseteq V\left(  G\right)  -S$, and $G\left[  X\cup M\left(  X\right)
\right]  $ is connected, then
\[
X^{1}=X\cup M\left(  \left(  N\left(  X\right)  \cap S\right)  -M\left(
X\right)  \right)
\]
is an independent set, and $G\left[  X^{1}\cup M\left(  X^{1}\right)  \right]
$ is connected.
\end{lemma}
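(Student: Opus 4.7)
My main idea is to exploit the uniqueness of the bipartition of a connected bipartite graph. The subgraph $G[X\cup M(X)]$ carries two natural bipartitions. On the one hand, $\{X,M(X)\}$ is a bipartition, because both $X$ and $M(X)\subseteq S$ are independent in $G$ (so there are no edges inside either part), whereas every edge of $G[X\cup M(X)]$ joins one of its vertices to another. On the other hand, the restriction of the original bipartition gives $\{(X\cup M(X))\cap A,\,(X\cup M(X))\cap B\}$. Since $G[X\cup M(X)]$ is connected, its bipartition is unique, so after renaming if necessary, $X\subseteq A$ and $M(X)\subseteq B$.

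With $X\subseteq A$ in hand, the rest is direct. From $X\subseteq A$ I get $N(X)\subseteq B$, hence $W:=(N(X)\cap S)-M(X)\subseteq B$, and then $M(W)\subseteq A$ because $M$ is a perfect matching between $A$ and $B$. Therefore $X^{1}=X\cup M(W)\subseteq A$, and since $A$ is an independent set of the bipartite graph $G$, the set $X^{1}$ is independent.

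For the connectedness of $G[X^{1}\cup M(X^{1})]$, I would first compute $M(X^{1})=M(X)\cup W$, which yields
\[
X^{1}\cup M(X^{1})=(X\cup M(X))\cup (W\cup M(W)).
\]
The piece $G[X\cup M(X)]$ is connected by hypothesis. Every $w\in W$ lies in $N(X)$, so $w$ has a neighbour in $X$ that attaches it to that connected piece inside the larger induced subgraph. Finally every $M(w)\in M(W)$ is joined to $w\in W$ by a matching edge. Chaining these, every vertex of $X^{1}\cup M(X^{1})$ lies in the same component as $X\cup M(X)$, so $G[X^{1}\cup M(X^{1})]$ is connected.

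The main obstacle, in my view, is spotting that the connectedness hypothesis is exactly what collapses $X$ into a single side of the bipartition. Without that observation, one is tempted to prove independence of $X^{1}$ by chasing edges between $X$ and $M(W)$, which is awkward because $V-S$ is in general not independent and there is no obvious local reason that an element of $M(W)\subseteq V-S$ cannot be adjacent to an element of $X\subseteq V-S$. Once the bipartition-uniqueness step is isolated, both conclusions follow by routine bookkeeping with $M$, $N$, and $S$.
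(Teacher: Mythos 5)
Your proof is correct, and it takes a genuinely different route from the paper's. The paper establishes independence of $X^{1}$ by two separate odd-cycle arguments: it supposes an edge inside $M\left(  \left(  N\left(  X\right)  \cap S\right)  -M\left(  X\right)  \right)  $, or between that set and $X$, and in each case closes an odd cycle using a path inside the connected subgraph $G\left[  X\cup M\left(  X\right)  \right]  $ together with matching edges, contradicting bipartiteness. You instead observe that $\left\{  X,M\left(  X\right)  \right\}  $ is a bipartition of the connected graph $G\left[  X\cup M\left(  X\right)  \right]  $, so by uniqueness of the $2$-colouring of a connected bipartite graph it must agree with the restriction of $\left\{  A,B\right\}  $, forcing $X\subseteq A$ (say); then $X^{1}\subseteq A$ and independence is immediate. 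This is cleaner and it exposes exactly where the connectivity hypothesis is used; it is also, in a sense, the fact the paper is using implicitly, since the parity claims in the paper's proof (that the path joining $w_{1},w_{2}\in X$ has even length, and the path joining $M\left(  v\right)  $ to $w$ has odd length) are only valid because $X$ lies entirely on one side of the bipartition of $G\left[  X\cup M\left(  X\right)  \right]  $. The paper's version is more self-contained and elementary, at the cost of a case split and some bookkeeping with cycle lengths. One small point you should make explicit in either approach: the inclusion $M\left(  X\right)  \subseteq S$, which you need in order to know that $M\left(  X\right)  $ is independent, follows from Theorem \ref{Th5}\emph{(i)} (a maximum matching of a K\"{o}nig-Egerv\'{a}ry graph matches $V-S$ into $S$), or directly from the fact that each edge of a perfect matching meets $S$ in exactly one vertex when $\left\vert S\right\vert =\alpha\left(  G\right)  =\mu\left(  G\right)  $. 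Your connectivity argument coincides with the paper's, but is spelled out in more detail.
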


\begin{proof}
Let us show that the set $M\left(  \left(  N\left(  X\right)  \cap S\right)
-M\left(  X\right)  \right)  $ is independent. Suppose, to the contrary, that
there exist $v_{1},v_{2}\in M\left(  \left(  N\left(  X\right)  \cap S\right)
-M\left(  X\right)  \right)  $ such that $v_{1}v_{2}\in E\left(  G\right)  $.
Hence $M\left(  v_{1}\right)  ,M\left(  v_{2}\right)  \in\left(  N\left(
X\right)  \cap S\right)  -M\left(  X\right)  $.

If $M\left(  v_{1}\right)  $ and $M\left(  v_{2}\right)  $ have a common
neighbor $w\in X$, then $\left\{  v_{1},v_{2},M\left(  v_{2}\right)
,w,M\left(  v_{1}\right)  \right\}  $ spans $C_{5}$, which is forbidden for
bipartite graphs.

Otherwise, let $w_{1},w_{2}\in X$ be neighbors of $M\left(  v_{1}\right)  $
and $M\left(  v_{2}\right)  $, respectively. Since $G\left[  X\cup M\left(
X\right)  \right]  $ is connected, there is a path with even number of edges
connecting $w_{1}$ and $w_{2}$. Together with $\left\{  w_{1},M\left(
v_{1}\right)  ,v_{1},v_{2},M\left(  v_{2}\right)  ,w_{2}\right\}  $ this path
produces a cycle of odd length in contradiction with the hypothesis on $G$
being a bipartite graph.

To complete the proof of independence of the set
\[
X^{1}=X\cup M\left(  \left(  N\left(  X\right)  \cap S\right)  -M\left(
X\right)  \right)
\]
it is enough to demonstrate that there are no edges connecting vertices of $X$
and $M\left(  \left(  N\left(  X\right)  \cap S\right)  -M\left(  X\right)
\right)  $. \begin{figure}[h]
\setlength{\unitlength}{1.0cm} \begin{picture}(5,4)\thicklines
\put(6.7,1){\oval(10,1.5)}
\put(5,1){\oval(4,1)}
\put(5,3){\oval(4,1)}
\put(6.7,3){\oval(10,1.5)}
\put(9,1){\oval(2.5,1)}
\put(9,3){\oval(2.5,1)}
\multiput(3.5,1)(0.7,0){2}{\circle*{0.29}}
\multiput(3.5,3)(0.7,0){2}{\circle*{0.29}}
\multiput(5.5,1)(0,2){2}{\circle*{0.29}}
\multiput(6.5,1)(0,2){2}{\circle*{0.29}}
\multiput(4.5,1)(0.35,0){3}{\circle*{0.1}}
\multiput(4.5,3)(0.35,0){3}{\circle*{0.1}}
\put(5.5,1){\line(0,1){2}}
\put(5.5,1){\line(3,2){3}}
\put(6.5,1){\line(0,1){2}}
\put(6.5,1){\line(3,2){3}}
\put(6.5,1){\line(1,1){2}}
\multiput(3.5,1)(0.7,0){2}{\line(0,1){2}}
\put(1,1){\makebox(0,0){$V-S$}}
\put(0.2,2.2){\makebox(0,0){$G$}}
\put(2.5,1){\makebox(0,0){$X$}}
\put(2.45,3){\makebox(0,0){$M(X)$}}
\put(1.2,3){\makebox(0,0){$S$}}
\multiput(8.5,1)(1,0){2}{\circle*{0.29}}
\multiput(8.5,3)(1,0){2}{\circle*{0.29}}
\put(8.5,1){\line(0,1){2}}
\put(9.5,1){\line(0,1){2}}
\put(10.85,1){\makebox(0,0){$M(Y)$}}
\put(10.75,3){\makebox(0,0){$Y$}}
\end{picture}\caption{$S\in\Omega(G)$, $Y=\left(  N\left(  X\right)  \cap
S\right)  -M\left(  X\right)  ${ and }$X^{1}=X\cup M\left(  Y\right)  $.}%
\end{figure}

Assume, to the contrary, that there is $vw\in E$, such that $v\in M\left(
\left(  N\left(  X\right)  \cap S\right)  -M\left(  X\right)  \right)  $ and
$w\in X$. Since $M\left(  v\right)  \in\left(  N\left(  X\right)  \cap
S\right)  -M\left(  X\right)  $ and $G\left[  X\cup M\left(  X\right)
\right]  $ is connected, it follows that there exists a path with an odd
number of edges connecting $M\left(  v\right)  $ to $w$. This path together
with the edges $vw$ and $vM\left(  v\right)  $ produces cycle of odd length,
in contradiction with the bipartiteness of $G$.

Finally, since $G\left[  X\cup M\left(  X\right)  \right]  $ is connected,
$G\left[  X^{1}\cup M\left(  X^{1}\right)  \right]  $ is connected as well, by
definitions of set functions $N$ and $M$.
\end{proof}

Theorem \ref{Th33}\emph{(vii)} claims that $\mathrm{\ker}(G)\subseteq
\mathrm{core}(G)$ for every graph.

\begin{theorem}
\label{th10}If $G$ is a bipartite graph, then $\mathrm{\ker}(G)=\mathrm{core}%
(G)$.
\end{theorem}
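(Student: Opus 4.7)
The plan is to reduce the problem to the case of a bipartite graph with a perfect matching, where the core is trivially empty. By Theorem~\ref{Th33}\emph{(vii)} we already have $\mathrm{\ker}(G)\subseteq\mathrm{core}(G)$, so only the reverse inclusion requires work.

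I would introduce the induced subgraph $H:=G-\mathrm{\ker}(G)-N(\mathrm{\ker}(G))$, which is again bipartite with classes $A^{\ast}:=A\cap V(H)$ and $B^{\ast}:=B\cap V(H)$, and first show that $H$ has a perfect matching. Since $\mathrm{\ker}(G)\subseteq\mathrm{core}(G)$, every $S\in\Omega(G)$ contains $\mathrm{\ker}(G)$; conversely, any independent set $T\subseteq V(H)$ is disjoint from $N(\mathrm{\ker}(G))$, so $T\cup\mathrm{\ker}(G)$ is independent in $G$. Thus $S\mapsto S\setminus\mathrm{\ker}(G)$ is a bijection between $\Omega(G)$ and $\Omega(H)$, giving $\alpha(H)=\alpha(G)-|\mathrm{\ker}(G)|$. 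Because $H$ is bipartite, hence K\"{o}nig-Egerv\'{a}ry, one has $\mu(H)=|V(H)|-\alpha(H)=\mu(G)-|N(\mathrm{\ker}(G))|$. Since $\mathrm{\ker}(G)$ is $d$-critical (Theorem~\ref{Th33}\emph{(vi)}), Theorem~\ref{Th5}\emph{(ii)} yields $|\mathrm{\ker}(G)|-|N(\mathrm{\ker}(G))|=d_{c}(G)=\alpha(G)-\mu(G)$, so $\alpha(H)-\mu(H)=0$ and therefore $\alpha(H)=\mu(H)=|V(H)|/2$, i.e., $H$ admits a perfect matching.

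The same bijection now gives $\mathrm{core}(G)=\mathrm{\ker}(G)\cup\mathrm{core}(H)$: vertices of $N(\mathrm{\ker}(G))$ are excluded from every maximum independent set of $G$, while a vertex of $V(H)$ lies in $\mathrm{core}(G)$ iff it lies in every member of $\Omega(H)$. Since $H$ has a perfect matching, $|A^{\ast}|=|B^{\ast}|=\alpha(H)$, so both $A^{\ast}$ and $B^{\ast}$ are maximum independent sets of $H$; because $A^{\ast}\cap B^{\ast}=\emptyset$, this forces $\mathrm{core}(H)=\emptyset$, and hence $\mathrm{core}(G)=\mathrm{\ker}(G)$.

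The only delicate step is the double-counting that establishes $\alpha(H)=\mu(H)$, which hinges on the identity $d_{c}(G)=|\mathrm{\ker}(G)|-|N(\mathrm{\ker}(G))|$ for the K\"{o}nig-Egerv\'{a}ry graph $G$; once this is in hand, the rest of the argument is essentially formal.
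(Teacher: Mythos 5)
Your proof is correct, and it takes a genuinely different route from the paper's. The paper first proves a matching-theoretic characterization of $\ker(G)$ (Theorem \ref{th9}: a critical independent set $X$ equals $\ker(G)$ iff no non-empty $Y\subseteq N(X)$ satisfies $\left\vert N(Y)\cap X\right\vert =\left\vert Y\right\vert$) and then verifies that condition for $\mathrm{core}(G)$ by a minimal-counterexample argument: a minimal tight set $Z_{0}\subseteq N(\mathrm{core}(G))$ is shown to be independent and connected, and an iterated alternating construction $Z_{0},Z_{1},\dots,Z^{\infty}$ (supported by Lemma \ref{lem1}) manufactures a maximum independent set missing part of $\mathrm{core}(G)$, a contradiction. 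You instead delete $N[\ker(G)]$ and reduce to the perfect-matching case: the bijection $S\mapsto S\setminus\ker(G)$ between $\Omega(G)$ and $\Omega(H)$ gives $\alpha(H)=\alpha(G)-\left\vert \ker(G)\right\vert$, the K\"{o}nig--Egerv\'{a}ry identity gives $\mu(H)=\mu(G)-\left\vert N(\ker(G))\right\vert$, and the identity $d(\ker(G))=d_{c}(G)=\alpha(G)-\mu(G)$ (from Theorems \ref{Th33}\emph{(vi)} and \ref{Th5}\emph{(ii)}) forces $\alpha(H)=\mu(H)$, so $H$ has a perfect matching and its two colour classes are disjoint maximum independent sets, whence $\mathrm{core}(H)=\emptyset$ and $\mathrm{core}(G)=\ker(G)\cup\mathrm{core}(H)=\ker(G)$. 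I checked the delicate points --- that $S\setminus\ker(G)$ lands in $V(H)$ because $S\supseteq\ker(G)$ forces $S\cap N(\ker(G))=\emptyset$, and the degenerate cases $\ker(G)=\emptyset$ and $V(H)=\emptyset$ --- and they all go through. Your argument is shorter and essentially arithmetic once the bijection is set up; what it does not deliver is Theorem \ref{th9} itself, which the paper needs anyway and which carries independent structural content about when a critical independent set is minimal.
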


\begin{proof}
The assertions are clearly true, whenever $\mathrm{core}(G)=\emptyset$, i.e.,
for $G$ having a perfect matching. Assume that $\mathrm{core}(G)\neq\emptyset$.

Let $S\in\Omega\left(  G\right)  $ and $M$ be a maximum matching. By Theorem
\ref{Th5}\emph{(i)}, $M$ matches $V\left(  G\right)  -S$ into $S$, and
$N(\mathrm{core}(G))$ into $\mathrm{core}(G)$.

According to Theorem \ref{th9}, it is sufficient to show that there is no set
$Z\subseteq N\left(  \mathrm{core}(G)\right)  $, $Z\neq\emptyset$, such that
$\left\vert N\left(  Z\right)  \cap\mathrm{core}(G)\right\vert =\left\vert
Z\right\vert $.

Suppose, to the contrary, that there exists a non-empty set $Z\subseteq$
$N\left(  \mathrm{core}(G)\right)  $ such that $\left\vert N\left(  Z\right)
\cap\mathrm{core}(G)\right\vert =\left\vert Z\right\vert $. Let $Z_{0}$ be a
minimal non-empty subset of $N\left(  \mathrm{core}(G)\right)  $ enjoying this equality.

Clearly, $H=G\left[  Z_{0}\cup M\left(  Z_{0}\right)  \right]  $ is bipartite,
because it is a subgraph of a bipartite graph. Moreover, the restriction of
$M$ on $H$ is a perfect matching.

\textit{Claim 1.} $Z_{0}$ is independent.

Since $H$ is a bipartite graph with a perfect matching it has two maximum
independent sets at least. Hence there exists $W\in\Omega\left(  H\right)  $
different from $M\left(  Z_{0}\right)  $. Thus $W\cap Z_{0}\neq\emptyset$.
Therefore, $N\left(  W\cap Z_{0}\right)  \cap\mathrm{core}(G)=M\left(  W\cap
Z_{0}\right)  $. Consequently,
\[
\left\vert N\left(  W\cap Z_{0}\right)  \cap\mathrm{core}(G)\right\vert
=\left\vert M\left(  W\cap Z_{0}\right)  \right\vert =\left\vert W\cap
Z_{0}\right\vert .
\]
Finally, $W\cap Z_{0}=Z_{0}$, because $Z_{0}$ has been chosen as a minimal
subset of $N\left(  \mathrm{core}(G)\right)  $ such that $\left\vert N\left(
Z_{0}\right)  \cap\mathrm{core}(G)\right\vert =\left\vert Z_{0}\right\vert $.
Since $\left\vert Z_{0}\right\vert =\alpha\left(  H\right)  =\left\vert
W\right\vert $ we conclude with $W=Z_{0}$, which means, in particular, that
$Z_{0}$ is independent.

\textit{Claim 2.} $H$ is a connected graph.

Otherwise, for any connected component of $H$, say $\tilde{H}$, the set
$V\left(  \tilde{H}\right)  \cap Z_{0}$ contradicts the minimality property of
$Z_{0}$.

\textit{Claim 3.} $Z_{0}\cup$ $\left(  \mathrm{core}(G)-M\left(  Z_{0}\right)
\right)  $ is independent.

By Claim 1 $Z_{0}$ is independent. The equality $\left\vert N\left(
Z_{0}\right)  \cap\mathrm{core}(G)\right\vert =\left\vert Z_{0}\right\vert $
implies $N\left(  Z_{0}\right)  \cap\mathrm{core}(G)=M\left(  Z_{0}\right)  $,
which means that there are no edges connecting $Z_{0}$ and $\mathrm{core}%
(G)-M\left(  Z_{0}\right)  $. Consequently, $Z_{0}\cup$ $\left(
\mathrm{core}(G)-M\left(  Z_{0}\right)  \right)  $ is independent.

\textit{Claim 4.} $Z_{0}\cup$ $\left(  \mathrm{core}(G)-M\left(  Z_{0}\right)
\right)  $ is included in a maximum independent set.

Let $Z_{i}=M\left(  \left(  N\left(  Z_{i-1}\right)  \cap S\right)  -M\left(
Z_{i-1}\right)  \right)  ,1\leq i<\infty$. By Lemma \ref{lem1} all the sets
$Z^{i}=\bigcup\limits_{0\leq j\leq i}Z_{j},1\leq i<\infty$ are independent.
Define%
\[
Z^{\infty}=\bigcup\limits_{0\leq i\leq\infty}Z_{i},
\]
which is, actually, the largest set in the sequence $\left\{  Z^{i},1\leq
i<\infty\right\}  $.\begin{figure}[h]
\setlength{\unitlength}{1.0cm} \begin{picture}(5,4)\thicklines
\put(7.5,1){\oval(11,1.5)}
\put(6.7,3){\oval(12.5,1.7)}
\put(3.5,3){\oval(4.35,1.2)}
\put(2.5,3){\oval(1.5,0.8)}
\put(4.5,1){\oval(2,0.8)}
\put(4.5,3){\oval(2,0.8)}
\put(7.5,1){\oval(2,0.8)}
\put(7.5,3){\oval(2,0.8)}
\put(10.5,1){\oval(2,0.8)}
\put(10.5,3){\oval(2,0.8)}
\multiput(4,1)(1,0){2}{\circle*{0.29}}
\multiput(4,3)(1,0){2}{\circle*{0.29}}
\multiput(4,1)(1,0){2}{\line(0,1){2}}
\put(4,1){\line(3,2){3}}
\put(5,1){\line(3,2){3}}
\put(5,1){\line(1,1){2}}
\multiput(7,1)(1,0){2}{\circle*{0.29}}
\multiput(7,3)(1,0){2}{\circle*{0.29}}
\put(7,1){\line(0,1){2}}
\put(8,1){\line(0,1){2}}
\put(7,1){\line(3,2){3}}
\put(8,1){\line(3,2){3}}
\put(8,1){\line(1,1){2}}
\multiput(10,1)(1,0){2}{\circle*{0.29}}
\multiput(10,3)(1,0){2}{\circle*{0.29}}
\put(10,1){\line(0,1){2}}
\put(11,1){\line(0,1){2}}
\multiput(11.9,1)(0.35,0){3}{\circle*{0.1}}
\multiput(11.9,3)(0.35,0){3}{\circle*{0.1}}
\put(0.1,3){\makebox(0,0){$S$}}
\put(1,1){\makebox(0,0){$V-S$}}
\put(0.2,2){\makebox(0,0){$G$}}
\put(0.9,3){\makebox(0,0){$core$}}
\put(3,1){\makebox(0,0){$Z_{0}$}}
\put(4.5,3.1){\makebox(0,0){$Y_{0}$}}
\put(2.5,3.1){\makebox(0,0){$Q$}}
\put(6,1){\makebox(0,0){$Z_{1}$}}
\put(7.5,3.1){\makebox(0,0){$Y_{1}$}}
\put(9,1){\makebox(0,0){$Z_{2}$}}
\put(10.5,3.1){\makebox(0,0){$Y_{2}$}}
\end{picture}\caption{$S\in\Omega(G)$, $Q=\mathrm{core}\left(  G\right)
-M\left(  Z_{0}\right)  $, $Y_{0}=$ $M\left(  Z_{0}\right)  $, $Y_{1}=\left(
N\left(  Z_{0}\right)  -M\left(  Z_{0}\right)  \right)  \cap S$, $Y_{2}=...$
and $Z_{i}=M\left(  Y_{i}\right)  ,i=1,2,...$ {.}}%
\end{figure}

The inclusion
\[
Z_{0}\cup\left(  \mathrm{core}(G)-M\left(  Z_{0}\right)  \right)
\subseteq\left(  S-M\left(  Z^{\infty}\right)  \right)  \cup Z^{\infty}%
\]
is justified by the definition of $Z^{\infty}$.

Since $\left\vert M\left(  Z^{\infty}\right)  \right\vert =\left\vert
Z^{\infty}\right\vert $ we obtain $\left\vert \left(  S-M\left(  Z^{\infty
}\right)  \right)  \cup Z^{\infty}\right\vert =\left\vert S\right\vert $.
According to the definition of $Z^{\infty}$ the set
\[
\left(  N\left(  Z^{\infty}\right)  \cap S\right)  -M\left(  Z^{\infty
}\right)
\]
is empty. In other words, the set $\left(  S-M\left(  Z^{\infty}\right)
\right)  \cup Z^{\infty}$ is independent. Therefore, we arrive at
\[
\left(  S-M\left(  Z^{\infty}\right)  \right)  \cup Z^{\infty}\in\Omega\left(
G\right)  .
\]

Consequently, $\left(  S-M\left(  Z^{\infty}\right)  \right)  \cup Z^{\infty}$
is a desired enlargement of $Z_{0}\cup$ $\left(  \mathrm{core}(G)-M\left(
Z_{0}\right)  \right)  $.

\textit{Claim 5.} $\mathrm{core}(G)\cap\left(  \left(  S-M\left(  Z^{\infty
}\right)  \right)  \cup Z^{\infty}\right)  =\mathrm{core}(G)-M\left(
Z_{0}\right)  $.

The only part of $\left(  S-M\left(  Z^{\infty}\right)  \right)  \cup
Z^{\infty}$ that interacts with $\mathrm{core}(G)$ is the subset
\[
Z_{0}\cup\left(  \mathrm{core}(G)-M\left(  Z_{0}\right)  \right)  .
\]
Hence we obtain
\begin{gather*}
\mathrm{core}(G)\cap\left(  \left(  S-M\left(  Z^{\infty}\right)  \right)
\cup Z^{\infty}\right)  =\\
=\mathrm{core}(G)\cap\left(  Z_{0}\cup\left(  \mathrm{core}(G)-M\left(
Z_{0}\right)  \right)  \right)  =\mathrm{core}(G)-M\left(  Z_{0}\right)  .
\end{gather*}

Since $Z_{0}$ is non-empty, by Claim 5 we arrive at the following
contradiction
\[
\mathrm{core}(G)\nsubseteq\left(  S-M\left(  Z^{\infty}\right)  \right)  \cup
Z^{\infty}\in\Omega\left(  G\right)  .
\]

Finally, we conclude with the fact there is no set $Z\subseteq$ $N\left(
\mathrm{core}(G)\right)  ,Z\neq\emptyset$ such that $\left\vert N\left(
Z\right)  \cap\mathrm{core}(G)\right\vert =\left\vert Z\right\vert $, which,
by Theorem \ref{th9}, means that $\mathrm{core}(G)$ and $\mathrm{\ker}(G)$ coincide.
\end{proof}

Notice that there are non-bipartite graphs enjoying the equality
$\mathrm{\ker}(G)=\mathrm{core}(G)$; e.g., the graphs from Figure \ref{fig14}.
Notice that only $G_{1}$ is a K\"{o}nig--Egerv\'{a}ry graph. \begin{figure}[h]
\setlength{\unitlength}{1cm}\begin{picture}(5,1.2)\thicklines
\multiput(2,0)(1,0){4}{\circle*{0.29}}
\multiput(3,1)(1,0){3}{\circle*{0.29}}
\put(2,0){\line(1,0){3}}
\put(3,0){\line(0,1){1}}
\put(5,0){\line(0,1){1}}
\put(4,1){\line(1,0){1}}
\put(3,0){\line(1,1){1}}
\put(1.7,0){\makebox(0,0){$x$}}
\put(2.7,1){\makebox(0,0){$y$}}
\put(1,0.5){\makebox(0,0){$G_{1}$}}
\multiput(8,0)(1,0){5}{\circle*{0.29}}
\multiput(9,1)(1,0){3}{\circle*{0.29}}
\put(8,0){\line(1,0){4}}
\put(9,0){\line(0,1){1}}
\put(10,0){\line(0,1){1}}
\put(10,1){\line(1,0){1}}
\put(11,1){\line(1,-1){1}}
\put(7.7,0){\makebox(0,0){$a$}}
\put(8.7,1){\makebox(0,0){$b$}}
\put(7,0.5){\makebox(0,0){$G_{2}$}}
\end{picture}\caption{$\mathrm{core}(G_{1})=\ker\left(  G_{1}\right)
=\{x,y\}$ and $\mathrm{core}(G_{2})=\ker\left(  G_{2}\right)  =\{a,b\}$.}%
\label{fig14}%
\end{figure}

There is a non-bipartite K\"{o}nig-Egerv\'{a}ry graph $G$, such that
$\mathrm{\ker}(G)\neq\mathrm{core}(G)$. For instance, the graph $G_{1}$ from
Figure \ref{fig222} has $\mathrm{\ker}(G_{1})=\left\{  x,y\right\}  $, while
$\mathrm{core}(G_{1})=\left\{  x,y,u,v\right\}  $. The graph $G_{2}$ from
Figure \ref{fig222} has $\mathrm{\ker}(G_{2})=\emptyset$, while $\mathrm{core}%
(G_{2})=\left\{  w\right\}  $.

\begin{figure}[h]
\setlength{\unitlength}{1cm}\begin{picture}(5,1.3)\thicklines
\multiput(4,0)(1,0){4}{\circle*{0.29}}
\multiput(3,1)(1,0){5}{\circle*{0.29}}
\put(4,0){\line(1,0){3}}
\put(4,0){\line(0,1){1}}
\put(3,1){\line(1,-1){1}}
\put(5,0){\line(0,1){1}}
\put(5,0){\line(1,1){1}}
\put(5,1){\line(1,-1){1}}
\put(6,0){\line(0,1){1}}
\put(7,0){\line(0,1){1}}
\put(2.7,1){\makebox(0,0){$x$}}
\put(3.7,1){\makebox(0,0){$y$}}
\put(4.7,1){\makebox(0,0){$u$}}
\put(6.3,1){\makebox(0,0){$v$}}
\put(2,0.5){\makebox(0,0){$G_{1}$}}
\multiput(10,0)(1,0){2}{\circle*{0.29}}
\multiput(12,0)(0,1){2}{\circle*{0.29}}
\put(10,0){\line(1,0){2}}
\put(11,0){\line(1,1){1}}
\put(12,0){\line(0,1){1}}
\put(10,0.3){\makebox(0,0){$w$}}
\put(9,0.5){\makebox(0,0){$G_{2}$}}
\end{picture}\caption{Both $G_{1}$ and $G_{2}$\ are K\"{o}nig-Egerv\'{a}ry
graphs. Only $G_{2}$\ has a perfect matching.}%
\label{fig222}%
\end{figure}
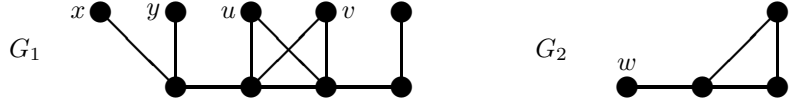

\section{Ker and Diadem}

\begin{proposition}
\cite{LevMan2003}\label{prop10} If $G$ is a K\"{o}nig--Egerv\'{a}ry graph, then

$N\left(  \mathrm{core}(G)\right)  =\cap\{V(G)-S:S\in(G)\}$, i.e., $N\left(
\mathrm{core}(G\right)  )=V\left(  G\right)  -\mathrm{corona}(G)$.
\end{proposition}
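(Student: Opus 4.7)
The plan is to prove the two complementary inclusions separately, namely
\[
N(\mathrm{core}(G)) \subseteq V(G) \setminus \mathrm{corona}(G) \quad \text{and} \quad V(G) \setminus \mathrm{corona}(G) \subseteq N(\mathrm{core}(G)).
\]
The first inclusion is immediate from the definitions: if $v \in N(\mathrm{core}(G))$, then $v$ has some neighbor $u \in \mathrm{core}(G)$; since $u$ belongs to every $S \in \Omega(G)$ and each such $S$ is independent, $v \notin S$ for every $S \in \Omega(G)$, so $v \in V(G) \setminus \mathrm{corona}(G)$.

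For the reverse inclusion I would argue by contrapositive: given $v \notin N(\mathrm{core}(G))$, I would produce an $S^{\ast} \in \Omega(G)$ with $v \in S^{\ast}$. Fix an arbitrary $S \in \Omega(G)$ (if $v \in S$ we are done) and a maximum matching $M$. Since $v \in V \setminus S$, Theorem \ref{Th5}(i) gives $M(v) \in S$; moreover $M(v) \notin \mathrm{core}(G)$, since otherwise $v \in N(\mathrm{core}(G))$. Inspired by Lemma \ref{lem1}, I would then grow an alternating swap set iteratively, setting $Y_0 = \{v\}$ and
\[
Y_{i+1} = Y_i \cup M\bigl((N(Y_i) \cap S) \setminus M(Y_i)\bigr),
\]
taking the fixed point $Y$, and defining $S^{\ast} = (S \setminus N(Y)) \cup Y$. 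Because $M$ bijects $Y$ with $N(Y) \cap S$, we have $|S^{\ast}| = |S| = \alpha(G)$; and $v \in S^{\ast}$ by construction, contradicting $v \notin \mathrm{corona}(G)$. Along the way, one also needs to check that the construction stays inside $V \setminus N[\mathrm{core}(G)]$, which follows because a neighbor of any $Y_i$-vertex lying in $\mathrm{core}(G)$ would, by tracing back through the alternating tree, yield an $M$-alternating path from $v$ ending at a vertex of $\mathrm{core}(G)$, contradicting $v \notin N(\mathrm{core}(G))$.

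The principal obstacle is verifying that $S^{\ast}$ is genuinely independent. In the bipartite case this is Lemma \ref{lem1}, where the absence of odd cycles kills every potentially offending edge. For a general K\"{o}nig--Egerv\'{a}ry graph that tool is unavailable, so instead I would rely on the defining identity $\alpha(G) + \mu(G) = |V(G)|$ together with the maximality of $M$: any edge inside $Y$ or between $Y$ and $S \setminus N(Y)$ would combine with the $M$-matching between $Y$ and $N(Y) \cap S$ to produce either an $M$-augmenting path (contradicting $|M| = \mu(G)$) or an independent set strictly larger than $S$ (contradicting $|S| = \alpha(G)$). Carrying out this case analysis cleanly is the step that requires the most care, and it is what separates the K\"{o}nig--Egerv\'{a}ry setting of this proposition from the purely bipartite setting of Lemma \ref{lem1}.
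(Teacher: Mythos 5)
The paper gives no proof of this proposition (it is quoted from \cite{LevMan2003}), so your argument has to stand on its own. The first inclusion is fine. The second one, however, has a genuine gap exactly where you flag it: the independence of $S^{\ast}=(S\setminus N(Y))\cup Y$ --- equivalently, of $Y$ itself --- is never established. The assertion that an edge inside $Y$ would produce an $M$-augmenting path or an independent set larger than $S$ is not substantiated: two adjacent vertices $z,z'\in Y$ close up, together with their alternating-tree paths back to $v$, an odd cycle on which $M$ covers every vertex except $v$, while $v$ is matched outside the cycle; that configuration is neither an augmenting path nor an obvious source of a larger independent set, and it genuinely occurs in K\"{o}nig--Egerv\'{a}ry graphs (e.g., a triangle plus suitable matched partners) --- it is excluded here only because of the hypothesis $v\notin N(\mathrm{core}(G))$, which is precisely what a correct argument would have to exploit at this point. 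The auxiliary claim that the construction stays outside $N[\mathrm{core}(G)]$ rests on a non sequitur: an $M$-alternating path from $v$ ending at a vertex of $\mathrm{core}(G)$ does not by itself contradict $v\notin N(\mathrm{core}(G))$; one must in addition switch $M$ along the path and reason about the newly exposed vertex. There is also a well-definedness issue: $M\bigl((N(Y_i)\cap S)\setminus M(Y_i)\bigr)$ makes sense only after you rule out $M$-unmatched vertices of $S$ in $N(Y_i)$ (they exist exactly when $\alpha(G)>\mu(G)$, and they all lie in $\mathrm{core}(G)$, so this again feeds back into the unproven claim).

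All of this machinery is avoidable, because you chose the hard contrapositive. Prove the inclusion $V(G)\setminus\mathrm{corona}(G)\subseteq N(\mathrm{core}(G))$ directly: if $v$ belongs to no maximum independent set, then $v\in V(G)\setminus S$ for every $S\in\Omega(G)$; fixing one maximum matching $M$, Theorem \ref{Th5}\emph{(i)} says that $M$ matches $V(G)\setminus S$ into $S$ for each such $S$, so $v$ is matched and the single vertex $M(v)$ lies in every $S\in\Omega(G)$, i.e., $M(v)\in\mathrm{core}(G)$ and hence $v\in N(\mathrm{core}(G))$. This two-line argument, combined with your (correct) first inclusion, completes the proof and makes the entire alternating-tree construction unnecessary.
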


There is a non-K\"{o}nig-Egerv\'{a}ry graph $G$ with $V\left(  G\right)
=N\left(  \mathrm{core}(G)\right)  \cup\mathrm{corona}(G)$; e.g., the graph
$G$ from Figure \ref{fig1777}.

\begin{figure}[h]
\setlength{\unitlength}{1cm}\begin{picture}(5,1.2)\thicklines
\multiput(4,0)(1,0){8}{\circle*{0.29}}
\multiput(5,1)(2,0){2}{\circle*{0.29}}
\multiput(8,1)(2,0){2}{\circle*{0.29}}
\put(4,0){\line(1,0){7}}
\put(5,0){\line(0,1){1}}
\put(7,0){\line(0,1){1}}
\put(7,1){\line(1,0){1}}
\put(8,1){\line(1,-1){1}}
\put(10,0){\line(0,1){1}}
\put(10,1){\line(1,-1){1}}
\put(4,0.3){\makebox(0,0){$x$}}
\put(5.3,1){\makebox(0,0){$y$}}
\put(6,0.3){\makebox(0,0){$z$}}
\put(3.2,0.5){\makebox(0,0){$G$}}
\end{picture}
\caption{$G$ is not a K\"{o}nig-Egerv\'{a}ry graph, and $\mathrm{core}%
(G)=\left\{  x,y,z\right\}  $. }%
\label{fig1777}%
\end{figure}
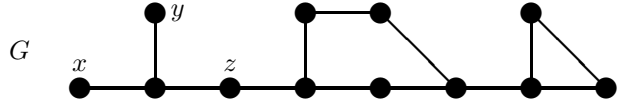

\begin{theorem}
\label{th11}If $G$ is a K\"{o}nig-Egerv\'{a}ry graph, then

\emph{(i)}$\ \left\vert \mathrm{corona}(G)\right\vert +\left\vert
\mathrm{core}(G)\right\vert =2\alpha\left(  G\right)  $;

\emph{(ii) }$\mathrm{diadem}(G)=\mathrm{corona}(G)$, while $\mathrm{diadem}%
(G)\subseteq\mathrm{corona}(G)$ is true for every graph;

\emph{(iii) }$\left\vert \ker\left(  G\right)  \right\vert +\left\vert
\text{\textrm{diadem}}\left(  G\right)  \right\vert \leq2\alpha\left(
G\right)  $.
\end{theorem}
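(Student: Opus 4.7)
The plan is to unpack the three parts in the given order, leaning on the König--Egerváry structural equalities already in hand.

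For part \emph{(i)}, I would combine Proposition \ref{prop10} with Theorem \ref{Th5}\emph{(ii)} and the defining identity $\alpha(G)+\mu(G)=|V(G)|$ of a König--Egerváry graph. Proposition \ref{prop10} gives $|\mathrm{corona}(G)|=|V(G)|-|N(\mathrm{core}(G))|$, and Theorem \ref{Th5}\emph{(ii)} supplies $|N(\mathrm{core}(G))|=|\mathrm{core}(G)|-(\alpha(G)-\mu(G))$. Substituting the first into the second and replacing $\mu(G)$ by $|V(G)|-\alpha(G)$ yields $|\mathrm{corona}(G)|=2\alpha(G)-|\mathrm{core}(G)|$, which is the asserted identity. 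This is pure bookkeeping, so no real obstacle.

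For part \emph{(ii)}, the inclusion $\mathrm{diadem}(G)\subseteq\mathrm{corona}(G)$ for every graph is immediate from Theorem \ref{Th33}\emph{(iii)}: each critical independent set sits inside some $S\in\Omega(G)$, hence inside $\mathrm{corona}(G)$, and therefore so does their union. For the reverse inclusion under the König--Egerváry hypothesis, I would invoke Theorem \ref{Th5}\emph{(ii)}, which states that every $S\in\Omega(G)$ is itself $d$-critical. Consequently $\mathrm{corona}(G)=\bigcup\Omega(G)$ is a union of critical independent sets, so $\mathrm{corona}(G)\subseteq\mathrm{diadem}(G)$, and equality follows.

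Part \emph{(iii)} is then a single line: by Theorem \ref{Th33}\emph{(vii)} we have $\ker(G)\subseteq\mathrm{core}(G)$, and by part \emph{(ii)} just proved, $\mathrm{diadem}(G)=\mathrm{corona}(G)$, so
\[
|\ker(G)|+|\mathrm{diadem}(G)|\;\leq\;|\mathrm{core}(G)|+|\mathrm{corona}(G)|\;=\;2\alpha(G),
\]
using part \emph{(i)} for the last equality. Since every needed tool has been stated and cited earlier in the paper, I do not anticipate any genuine obstacle; the only care required is to apply Proposition \ref{prop10} in the correct complemented form and to note that the reverse inclusion in \emph{(ii)} is precisely where the König--Egerváry hypothesis enters, via Theorem \ref{Th5}\emph{(ii)}.
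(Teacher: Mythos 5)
Your proposal is correct and follows essentially the same route as the paper: part \emph{(i)} via Proposition \ref{prop10}, Theorem \ref{Th5}\emph{(ii)} and the K\"{o}nig--Egerv\'{a}ry identity, part \emph{(ii)} by combining Theorem \ref{Th5}\emph{(ii)} with Theorem \ref{Th33}\emph{(iii)}, and part \emph{(iii)} by adding Theorem \ref{Th33}\emph{(vii)} to the first two parts. No gaps.
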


\begin{proof}
\emph{(i) }Using Theorem \ref{Th5}\emph{(ii)} and Proposition \ref{prop10}, we
infer that
\begin{gather*}
\left\vert \mathrm{corona}(G)\right\vert +\left\vert \mathrm{core}%
(G)\right\vert =\left\vert \mathrm{corona}(G)\right\vert +\left\vert N\left(
\mathrm{core}(G)\right)  \right\vert +\left\vert \mathrm{core}(G)\right\vert
-\left\vert N\left(  \mathrm{core}(G)\right)  \right\vert =\\
=\left\vert V\left(  G\right)  \right\vert +d_{c}\left(  G\right)
=\alpha\left(  G\right)  +\mu\left(  G\right)  +d_{c}\left(  G\right)
=2\alpha\left(  G\right)  .
\end{gather*}
as claimed.

\emph{(ii)} Every $S\in\Omega\left(  G\right)  $ is $d$-critical, by Theorem
\ref{Th5}\emph{(ii)}. Further,\emph{ }Theorem \ref{Th5}\emph{(ii)} ensures
that $\mathrm{corona}(G)\subseteq\mathrm{diadem}(G)$. On the other hand, each
critical independent set is included in a maximum independent set, according
to Theorem \ref{Th33}\emph{(iii)}. Thus, we have $\mathrm{diadem}%
(G)\subseteq\mathrm{corona}(G)$. Consequently, the equality $\mathrm{diadem}%
(G)=\mathrm{corona}(G)$ holds.

\emph{(iii)} It follows by combining parts \emph{(i),(ii)} and Theorem
\ref{Th33}\emph{(vii)}.
\end{proof}

Notice that the graph from Figure \ref{fig1777} has $\left\vert
\mathrm{corona}(G)\right\vert +\left\vert \mathrm{core}(G)\right\vert
=13>12=2\alpha\left(  G\right)  $.

For a K\"{o}nig-Egerv\'{a}ry graph with $\left\vert \ker\left(  G\right)
\right\vert +\left\vert \text{\textrm{diadem}}\left(  G\right)  \right\vert
<2\alpha\left(  G\right)  $ see Figure \ref{fig222}. Figure \ref{fig1777}
shows that it is possible for a graph to have $\mathrm{diadem}%
(G)\varsubsetneqq\mathrm{corona}(G)$ and $\mathrm{\ker}(G)\varsubsetneqq
\mathrm{core}(G)$.

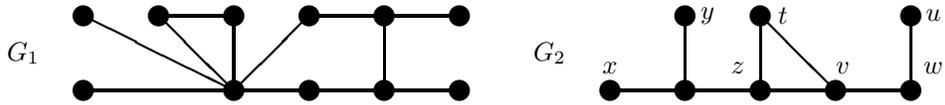
\begin{figure}[h]
\setlength{\unitlength}{1cm}\begin{picture}(5,1.2)\thicklines
\multiput(2,1)(1,0){6}{\circle*{0.29}}
\multiput(4,0)(1,0){4}{\circle*{0.29}}
\put(2,0){\circle*{0.29}}
\put(2,0){\line(1,0){5}}
\put(2,1){\line(2,-1){2}}
\put(3,1){\line(1,-1){1}}
\put(3,1){\line(1,0){1}}
\put(4,0){\line(0,1){1}}
\put(4,0){\line(1,1){1}}
\put(5,1){\line(1,0){2}}
\put(6,0){\line(0,1){1}}
\put(1.2,0.5){\makebox(0,0){$G_{1}$}}
\multiput(9,0)(1,0){5}{\circle*{0.29}}
\multiput(10,1)(1,0){2}{\circle*{0.29}}
\put(13,1){\circle*{0.29}}
\put(9,0){\line(1,0){4}}
\put(10,0){\line(0,1){1}}
\put(11,0){\line(0,1){1}}
\put(11,1){\line(1,-1){1}}
\put(13,0){\line(0,1){1}}
\put(9,0.3){\makebox(0,0){$x$}}
\put(10.3,1){\makebox(0,0){$y$}}
\put(11.3,1){\makebox(0,0){$t$}}
\put(13.3,1){\makebox(0,0){$u$}}
\put(10.7,0.3){\makebox(0,0){$z$}}
\put(12.1,0.3){\makebox(0,0){$v$}}
\put(13.3,0.3){\makebox(0,0){$w$}}
\put(8.2,0.5){\makebox(0,0){$G_{2}$}}
\end{picture}\caption{$G_{1}$ is a non-bipartite K\"{o}nig-Egerv\'{a}ry graph,
such that $\mathrm{\ker}(G_{1})=\mathrm{core}(G_{1})$ and \textrm{diadem}%
$\left(  G_{1}\right)  =\mathrm{corona}(G_{1})$; $G_{2}$ is a
non-K\"{o}nig-Egerv\'{a}ry graph, such that $\mathrm{\ker}(G)=\mathrm{core}%
(G)=\{x,y\}$; \textrm{diadem}$\left(  G\right)  \cup
\{z,t,v,w\}=\mathrm{corona}(G)$.}%
\label{fig17888}%
\end{figure}The combination of $\mathrm{diadem}(G)\varsubsetneqq
\mathrm{corona}(G)$ and $\mathrm{\ker}(G)=\mathrm{core}(G)$ is realized in
Figure \ref{fig17888}.

Now we are ready to describe both $\ker$ and \textrm{diadem} of a bipartite
graph in terms of its bipartition.

\begin{theorem}
Let $G=\left(  A,B,E\right)  $ be a bipartite graph. Then the following
assertions are true:

\emph{(i) }$\ker_{A}\left(  G\right)  \cup$ $\ker_{B}\left(  G\right)
=\ker\left(  G\right)  $;

\emph{(ii) }$\left\vert \ker\left(  G\right)  \right\vert +\left\vert
\text{\textrm{diadem}}\left(  G\right)  \right\vert =2\alpha\left(  G\right)
$;

\emph{(iii) }$\left\vert \ker_{A}\left(  G\right)  \right\vert +\left\vert
\text{\textrm{diadem}}_{B}\left(  G\right)  \right\vert =\left\vert \ker
_{B}\left(  G\right)  \right\vert +\left\vert \text{\textrm{diadem}}%
_{A}\left(  G\right)  \right\vert =\alpha\left(  G\right)  $;

\emph{(iv) }\textrm{diadem}$_{A}\left(  G\right)  \cup$ \textrm{diadem}%
$_{B}\left(  G\right)  =$ \textrm{diadem}$\left(  G\right)  $.
\end{theorem}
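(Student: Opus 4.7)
The plan is to prove the four parts in the order (i), (iv), (iii), (ii), each leaning on the machinery already assembled in Theorems \ref{Th4}, \ref{Cor1}, \ref{th10}, and \ref{th11}.

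Part (i) splits into two inclusions. For $\ker_{A}(G)\cup\ker_{B}(G)\subseteq\ker(G)$: the union is independent by Corollary \ref{Cor1}(ii) and $d$-critical by Theorem \ref{Th4}(iii), hence a critical independent set, and therefore contains $\ker(G)$. For the reverse: $\ker(G)$ is itself a critical independent set (intersection of such), so $\ker(G)\cap A$ is $A$-critical by Theorem \ref{Th4}(iv), and minimality of $\ker_{A}(G)$ (Theorem \ref{th1}(ii)) forces $\ker_{A}(G)\subseteq\ker(G)\cap A$; symmetrically for $\ker_{B}(G)$.

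Part (iv) follows the same template. The containment $\mathrm{diadem}(G)\subseteq\mathrm{diadem}_{A}(G)\cup\mathrm{diadem}_{B}(G)$ uses Theorem \ref{Th4}(iv) together with the maximality of $\mathrm{diadem}_{A}(G)$ and $\mathrm{diadem}_{B}(G)$ from Theorem \ref{th1}(ii). For the reverse I would use the following observation: for any $A$-critical set $X$, the union $X\cup\ker_{B}(G)$ is a critical independent set, where independence follows from the $A\leftrightarrow B$ symmetric form of Corollary \ref{Cor1}(i) and $d$-criticality from Theorem \ref{Th4}(iii). Taking $X=\mathrm{diadem}_{A}(G)$ gives $\mathrm{diadem}_{A}(G)\subseteq\mathrm{diadem}(G)$, and the dual statement handles $\mathrm{diadem}_{B}(G)$.

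The heart of the argument is part (iii), which I would prove through the identity $\mathrm{diadem}_{B}(G)=B-N(\ker_{A}(G))$. The inclusion $\mathrm{diadem}_{B}(G)\subseteq B-N(\ker_{A}(G))$ is Corollary \ref{Cor1}(i) applied with $Y=\mathrm{diadem}_{B}(G)$. For the reverse, set $S:=\ker_{A}(G)\cup(B-N(\ker_{A}(G)))$; then $S$ is independent and has size $|\ker_{A}(G)|+|B|-|N(\ker_{A}(G))|=\delta_{0}(A)+|B|=\alpha(G)$ by Theorem \ref{Th4}(ii), so $S\in\Omega(G)$. Now $S$ is $d$-critical by Theorem \ref{Th5}(ii), and Theorem \ref{Th4}(iv) makes $S\cap B=B-N(\ker_{A}(G))$ itself a $B$-critical set, hence contained in the maximal $\mathrm{diadem}_{B}(G)$. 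Thus $|\ker_{A}(G)|+|\mathrm{diadem}_{B}(G)|=|B|+\delta_{0}(A)=\alpha(G)$, and swapping the roles of $A$ and $B$ gives the other equality in (iii). Summing the two equalities and using parts (i), (iv) together with the disjointness $A\cap B=\emptyset$ then delivers (ii).

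The main obstacle is the $\subseteq$ direction of $B-N(\ker_{A}(G))=\mathrm{diadem}_{B}(G)$, since an arbitrary vertex of $B-N(\ker_{A}(G))$ has no a priori reason to lie in a $B$-critical set; the trick is to recognize the candidate maximum independent set $\ker_{A}(G)\cup(B-N(\ker_{A}(G)))$ and let Theorems \ref{Th5}(ii) and \ref{Th4}(iv) do the work. Everything else is organizational bookkeeping built on Ore's structural results and the fact that bipartite graphs are K\"onig--Egerv\'ary.
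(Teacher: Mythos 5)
Your argument is correct, and for parts (ii)--(iv) it takes a genuinely different route from the paper. Part (i) is essentially the paper's argument (Theorem \ref{Th4}(iii) plus Corollary \ref{Cor1} for criticality and independence of the union, Theorem \ref{Th4}(iv) plus minimality for the reverse inclusion), just organized as two inclusions rather than a proof by contradiction. For the rest, the paper never identifies $\mathrm{diadem}_{B}(G)$ explicitly: it proves the two inequalities $\left\vert \ker_{A}(G)\right\vert +\left\vert \mathrm{diadem}_{B}(G)\right\vert \leq\alpha(G)$ and $\left\vert \ker_{B}(G)\right\vert +\left\vert \mathrm{diadem}_{A}(G)\right\vert \leq\alpha(G)$ by a disjointness count inside $B$ (resp.\ $A$), adds $\left\vert \mathrm{diadem}(G)\right\vert \leq\left\vert \mathrm{diadem}_{A}(G)\right\vert +\left\vert \mathrm{diadem}_{B}(G)\right\vert $, and then closes the sandwich using the heavy machinery $\ker(G)=\mathrm{core}(G)$ (Theorem \ref{th10}), $\mathrm{diadem}(G)=\mathrm{corona}(G)$, and $\left\vert \mathrm{core}(G)\right\vert +\left\vert \mathrm{corona}(G)\right\vert =2\alpha(G)$ (Theorem \ref{th11}), forcing all inequalities to be equalities at once. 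You instead prove the exact set identity $\mathrm{diadem}_{B}(G)=B-N\left(  \ker_{A}(G)\right)  $ by exhibiting $\ker_{A}(G)\cup\left(  B-N\left(  \ker_{A}(G)\right)  \right)  $ as a maximum (hence $d$-critical) independent set and applying Theorem \ref{Th4}(iv); this yields (iii) directly, (ii) by summation over the disjoint bipartition, and (iv) via the critical independent set $\mathrm{diadem}_{A}(G)\cup\ker_{B}(G)$. Your version buys independence from Theorems \ref{th10} and \ref{th11} (so it would survive even if one wanted these identities before proving $\ker=\mathrm{core}$) and delivers sharper structural information, namely explicit descriptions of $\mathrm{diadem}_{A}(G)$ and $\mathrm{diadem}_{B}(G)$ in terms of the kernels; the paper's squeeze argument is shorter on the page only because it leans on results already established earlier in the text.
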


\begin{proof}
\emph{(i)} By Theorem \ref{Th4}\emph{(iii)}, $\ker_{A}\left(  G\right)  \cup$
$\ker_{B}\left(  G\right)  $ is $d$-critical in $G$. Moreover, $\ker
_{A}\left(  G\right)  \cup$ $\ker_{B}\left(  G\right)  $ is independent in
accordance with Corollary \ref{Cor1}. Assume that $\ker_{A}\left(  G\right)
\cup$ $\ker_{B}\left(  G\right)  $ is not minimal. Hence the unique minimal
$d$-critical set of $G$, say $Z$, is a proper subset of $\ker_{A}\left(
G\right)  \cup$ $\ker_{B}\left(  G\right)  $, by Theorem \ref{Th33}%
\emph{(vi)}. According to Theorem \ref{Th4}\emph{(iv)}, $Z_{A}=Z\cap A$ is an
$A$-critical set, which implies $\ker_{A}\left(  G\right)  \subseteq Z_{A}$,
and similarly, $\ker_{B}\left(  G\right)  \subseteq Z_{B}$. Consequently, we
get that $\ker_{A}\left(  G\right)  \cup$ $\ker_{B}\left(  G\right)  \subseteq
Z$, in contradiction with the fact that $\ker_{A}\left(  G\right)  \cup$
$\ker_{B}\left(  G\right)  \neq Z\subset\ker_{A}\left(  G\right)  \cup$
$\ker_{B}\left(  G\right)  $.

\emph{(ii), (iii), (iv) }By Corollary \ref{Cor1}, we have
\[
\left\vert \ker_{A}\left(  G\right)  \right\vert -\delta_{0}(A)+\left\vert
\text{\textrm{diadem}}_{B}\left(  G\right)  \right\vert =\left\vert N\left(
\ker_{A}\left(  G\right)  \right)  \right\vert +\left\vert
\text{\textrm{diadem}}_{B}\left(  G\right)  \right\vert \leq\left\vert
B\right\vert .
\]
Hence, according to Theorem \ref{Th4}\emph{(ii)}, it follows that
\[
\left\vert \ker_{A}\left(  G\right)  \right\vert +\left\vert
\text{\textrm{diadem}}_{B}\left(  G\right)  \right\vert \leq\left\vert
B\right\vert +\delta_{0}(A)=\alpha\left(  G\right)  .
\]
Changing the roles of $A$ and $B$, we obtain
\[
\left\vert \ker_{B}\left(  G\right)  \right\vert +\left\vert
\text{\textrm{diadem}}_{A}\left(  G\right)  \right\vert \leq\alpha\left(
G\right)  .
\]

By Theorem \ref{Th4}\emph{(iv)}, \textrm{diadem}$\left(  G\right)  \cap A$ is
$A$-critical and \textrm{diadem}$\left(  G\right)  \cap B$ is $B$-critical.
Hence \textrm{diadem}$\left(  G\right)  \cap A\subseteq$ \textrm{diadem}%
$_{A}\left(  G\right)  $ and \textrm{diadem}$\left(  G\right)  \cap
B\subseteq$ \textrm{diadem}$_{B}\left(  G\right)  $. It implies both the
inclusion $\mathrm{diadem}\left(  G\right)  \subseteq\mathrm{diadem}%
_{A}\left(  G\right)  \cup\mathrm{diadem}_{B}\left(  G\right)  $, and the
inequality
\[
\left\vert \mathrm{diadem}\left(  G\right)  \right\vert \leq\left\vert
\mathrm{diadem}_{A}\left(  G\right)  \right\vert +\left\vert \mathrm{diadem}%
_{B}\left(  G\right)  \right\vert .
\]

Combining Theorem \ref{th10}, Theorem \ref{th11}\emph{(ii)} and part
\emph{(i)} with the above inequalities, we deduce%
\begin{gather*}
2\alpha\left(  G\right)  \geq\left\vert \ker_{A}\left(  G\right)  \right\vert
+\left\vert \ker_{B}\left(  G\right)  \right\vert +\left\vert
\text{\textrm{diadem}}_{A}\left(  G\right)  \right\vert +\left\vert
\text{\textrm{diadem}}_{B}\left(  G\right)  \right\vert \geq\\
=\left\vert \ker\left(  G\right)  \right\vert +\left\vert
\text{\textrm{diadem}}\left(  G\right)  \right\vert =\left\vert
\text{\textrm{core}}\left(  G\right)  \right\vert +\left\vert
\text{\textrm{corona}}\left(  G\right)  \right\vert =2\alpha\left(  G\right)
.
\end{gather*}

Consequently, we infer that%
\begin{gather*}
\left\vert \text{\textrm{diadem}}_{A}\left(  G\right)  \right\vert +\left\vert
\text{\textrm{diadem}}_{B}\left(  G\right)  \right\vert =\left\vert
\text{\textrm{diadem}}\left(  G\right)  \right\vert ,\\
\left\vert \ker\left(  G\right)  \right\vert +\left\vert \text{\textrm{diadem}%
}\left(  G\right)  \right\vert =2\alpha\left(  G\right)  ,\\
\left\vert \ker_{A}\left(  G\right)  \right\vert +\left\vert
\text{\textrm{diadem}}_{B}\left(  G\right)  \right\vert =\left\vert \ker
_{B}\left(  G\right)  \right\vert +\left\vert \text{\textrm{diadem}}%
_{A}\left(  G\right)  \right\vert =\alpha\left(  G\right)  .
\end{gather*}

Since $\mathrm{diadem}\left(  G\right)  \subseteq\mathrm{diadem}_{A}\left(
G\right)  \cup\mathrm{diadem}_{B}\left(  G\right)  $ and $\mathrm{diadem}%
_{A}\left(  G\right)  \cap\mathrm{diadem}_{B}\left(  G\right)  =\emptyset$, we
finally obtain that
\[
\mathrm{diadem}_{A}\left(  G\right)  \cup\mathrm{diadem}_{B}\left(  G\right)
=\mathrm{diadem}\left(  G\right)  ,
\]
as claimed.
\end{proof}

\section{Conclusions}

In this paper we focus on interconnections between $\ker$, \textrm{core,
diadem,} and \textrm{corona} for K\"{o}nig-Egerv\'{a}ry graphs, in general,
and bipartite graphs, in particular.

In \cite{LevManLemma2011} we showed that $2\alpha\left(  G\right)
\leq\left\vert \text{\textrm{core}}\left(  G\right)  \right\vert +\left\vert
\text{\textrm{corona}}\left(  G\right)  \right\vert $ is true for every graph.
By Theorem \ref{th11}\emph{(i)}, this equality is true whenever $G$ is a
K\"{o}nig-Egerv\'{a}ry graph.

According to Theorem \ref{Th33}\emph{(vii)}, $\mathrm{\ker}(G)\subseteq
\mathrm{core}(G)$ for every graph. On the other hand, Theorem \ref{Th33}%
\emph{(iii) }implies the inclusion \textrm{diadem}$\left(  G\right)
\subseteq\mathrm{corona}(G)$. Hence%

\[
\left\vert \ker\left(  G\right)  \right\vert +\left\vert \text{\textrm{diadem}%
}\left(  G\right)  \right\vert \leq\left\vert \text{\textrm{core}}\left(
G\right)  \right\vert +\left\vert \text{\textrm{corona}}\left(  G\right)
\right\vert
\]
for each graph $G$. These remarks together with Theorem \ref{th11}\emph{(iii)}
motivate the following.

\begin{conjecture}
$\left\vert \ker\left(  G\right)  \right\vert +\left\vert
\text{\textrm{diadem}}\left(  G\right)  \right\vert \leq2\alpha\left(
G\right)  $ is true for every graph $G$.
\end{conjecture}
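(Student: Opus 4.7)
Set $K=\ker(G)$ and $D=\mathrm{diadem}(G)$. My plan is to reduce the conjecture to a cleaner instance by contracting away $K$ and its neighborhood, and then to tackle the reduced inequality.

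\emph{Setup and reduction.} Iterating Theorem~\ref{Th33}\emph{(v)} shows that $D$ is $d$-critical, so $|D|=d_c(G)+|N(D)|$ and similarly $|K|=d_c(G)+|N(K)|$. A preliminary observation I would establish is that every critical independent set $S$ of $G$ is disjoint from $N(K)$: by Theorem~\ref{Th33}\emph{(iii)}, $S$ sits inside some $I\in\Omega(G)$, and since $K\subseteq\mathrm{core}(G)\subseteq I$, a vertex of $S\cap N(K)$ would be adjacent inside $I$ to some vertex of $K$, contradicting independence of $I$. Hence $D\subseteq K\cup V(H)$, where $H:=G-N[K]$.

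\emph{The reduced conjecture.} Using the matching of Theorem~\ref{Th33}\emph{(ii)} from $N(K)$ into $K$ together with $K\subseteq\mathrm{core}(G)$, one obtains $\alpha(H)=\alpha(G)-|K|$. A direct calculation shows that for every independent $T\subseteq V(H)$ the identity $d_G(T\cup K)=d_H(T)+d_c(G)$ holds, whence Theorem~\ref{Th33}\emph{(vi)} forces $d_c(H)=0$ and $\ker(H)=\emptyset$. The same identity shows that $T$ is a critical independent set of $H$ if and only if $T\cup K$ is a critical independent set of $G$, giving
\[
\mathrm{diadem}(G)=K\cup\mathrm{diadem}(H),\qquad |\mathrm{diadem}(G)|=|K|+|\mathrm{diadem}(H)|.
\]
Substituting into the target $|K|+|D|\le 2\alpha(G)$ and cancelling $2|K|$, the conjecture reduces to
\[
|\mathrm{diadem}(H)|\le 2\alpha(H)\qquad\text{whenever}\ \ker(H)=\emptyset\text{ and }d_c(H)=0.
\]

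\emph{The main obstacle.} In the reduced setting the critical independent sets of $H$ are precisely the independent $I$ with $|I|=|N(I)|$, and $|\mathrm{diadem}(H)|=|N_H(\mathrm{diadem}(H))|$. The tempting shortcut --- that $\mathrm{diadem}(H)$ is itself independent and therefore contained in some maximum independent set --- fails: already $H=2K_{2}$ has $\mathrm{diadem}(H)=V(H)$, which carries edges. The route I would pursue is to prove that the induced subgraph $H':=H[\mathrm{diadem}(H)\cup N_H(\mathrm{diadem}(H))]$ is K\"onig--Egerv\'ary. Once this is known, the equality $|V(H')|\le 2|\mathrm{diadem}(H)|$ combined with $\alpha(H')=|V(H')|-\mu(H')\ge |V(H')|/2$ and the induced-subgraph inequality $\alpha(H')\le\alpha(H)$ yields the required bound $|\mathrm{diadem}(H)|\le 2\alpha(H')\le 2\alpha(H)$. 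Establishing that $H'$ is K\"onig--Egerv\'ary --- equivalently, gluing the family of matchings supplied by Theorem~\ref{Th33}\emph{(ii)}, one per critical independent set, into a single coherent matching on $H'$ --- is the step I expect to be the main obstacle, and is plausibly the essential reason the authors state the inequality only as a conjecture.
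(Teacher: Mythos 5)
A preliminary remark: the statement you were asked about is presented in the paper as a \emph{conjecture} --- the authors offer no proof of it --- so there is no argument of theirs to measure yours against; and your text is not a proof either, as you acknowledge in your final paragraph.

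What you do establish is correct. The disjointness of every critical independent set from $N(\ker(G))$ follows from Theorem \ref{Th33}\emph{(iii)} and \emph{(vii)} exactly as you argue; the identity $d_G(T\cup K)=d_H(T)+d_c(G)$ for $T\subseteq V(H)$, where $H=G-N[K]$, is a clean computation using $N_G(T)\cap K=\emptyset$ and $d(K)=d_c(G)$; and the consequences $\alpha(H)=\alpha(G)-|K|$, $d_c(H)=0$, and $\mathrm{diadem}(G)=K\cup\mathrm{diadem}(H)$ all check out. So the conjecture is indeed equivalent to $|\mathrm{diadem}(H)|\le 2\alpha(H)$ for graphs with $d_c(H)=0$, and your closing chain $|\mathrm{diadem}(H)|\le |V(H')|=\alpha(H')+\mu(H')\le 2\alpha(H')\le 2\alpha(H)$ would finish the job \emph{if} $H'=H\left[\mathrm{diadem}(H)\cup N(\mathrm{diadem}(H))\right]$ were known to be K\"onig--Egerv\'ary. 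That is exactly the step you leave open, and it is the entire difficulty: the matchings supplied by Theorem \ref{Th33}\emph{(ii)} are attached to different critical sets and do not glue together in any obvious way. What you need is essentially Larson's independence decomposition theorem (for a maximum critical independent set $J$, the subgraph induced by $J\cup N(J)$ is K\"onig--Egerv\'ary and contains every critical independent set), a nontrivial result not among those quoted in this paper. Note also that your $H'$ may be strictly larger than $J\cup N(J)$, since a neighbour of a vertex of $\mathrm{diadem}(H)$ lying in $N(J)$ need not stay inside $J\cup N(J)$; so even granting that theorem, your particular formulation of the missing lemma would need adjusting (it suffices, and is safer, to bound $|\mathrm{diadem}(H)|$ by $|J|+|N(J)|\le 2|J|\le 2\alpha(H)$ directly). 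As it stands, the proposal is a correct reduction of the conjecture to another unproved statement, not a proof.
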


When it is proved one can conclude that the following inequalities:
\[
\left\vert \ker\left(  G\right)  \right\vert +\left\vert \text{\textrm{diadem}%
}\left(  G\right)  \right\vert \leq2\alpha\left(  G\right)  \leq\left\vert
\text{\textrm{core}}\left(  G\right)  \right\vert +\left\vert
\text{\textrm{corona}}\left(  G\right)  \right\vert
\]
hold for every graph $G$.

\end{document}